\newtheorem{theorem}{Theorem}
\newtheorem{lemma}{Lemma}
\newtheorem{remark}{Remark}
\newtheorem{definition}{Definition}
\def\BibTeX{{\rm B\kern-.05em{\sc i\kern-.025em b}\kern-.08em
    T\kern-.1667em\lower.7ex\hbox{E}\kern-.125emX}}
\begin{document}
\title{Parallel Collaborative ADMM Privacy Computing and Adaptive GPU Acceleration for Distributed Edge Networks}

\author{Mengchun Xia, Zhicheng Dong,~\IEEEmembership{Member,~IEEE}, Donghong Cai,~\IEEEmembership{Senior Member,~IEEE}, Fang Fang,~\IEEEmembership{Senior Member,~IEEE}, Lisheng Fan,~\IEEEmembership{Member,~IEEE}, and Pingzhi Fan,~\IEEEmembership{Life Fellow,~IEEE}

\thanks{The work of Zhicheng Dong was supported by the Science and Technology Major Project of Tibetan Autonomous Region of China (NO. XZ202502ZY0070), the Science and Technology Major Project of Tibetan Autonomous Region of China (No. XZ202201ZD0006G); the work of Donghong Cai was supported by the Basic and Applied Basic Research Foundation of Guangdong province (No. 2024A1515012398); the work of Lisheng Fan was supported by the Science and Technology Program of Guangzhou (No. SL2023A03J00284). (Corresponding author: Zhicheng Dong.)

M. Xia and Z. Dong are with the College of Information Science and Technology, Tibet University, Lhasa, 850000, China (e-mail: mengchunxiavip@163.com; dongzc666@163.com).

D. Cai is with the College of Cyber Security, Jinan University, Guangzhou 510632, China (e-mail:  dhcai@jnu.edu.cn).

F. Fang is with the Department of Electrical and Computer Engineering, Western University, London N6A 5B9, Canada (e-mail: fang.fang@uwo.ca).

L. Fan is with the School of Computer Science, Guangzhou University, Guangzhou 510006, China (e-mail: lsfan@gzhu.edu.cn).

P. Fan is with the Key Lab of Information Coding and Transmission, Southwest Jiaotong University, Chengdu 610031, China (e-mail: pzfan@swjtu.edu.cn).}}

\markboth{Journal of \LaTeX\ Class Files,~Vol.~, No.~, ~2025}%
{How to Use the IEEEtran \LaTeX \ Templates}

\maketitle

\begin{abstract}
Distributed computing has been widely applied in distributed edge networks for reducing the processing burden of high-dimensional data centralization, where a high-dimensional computational task is decomposed into multiple low-dimensional collaborative processing tasks or multiple edge nodes use distributed data to train a global model. However, the computing power of a single-edge node is limited, and collaborative computing will cause information leakage and excessive communication overhead. In this paper, we design a parallel collaborative distributed alternating direction method of multipliers (ADMM) and propose a three-phase parallel collaborative ADMM privacy computing (3P-ADMM-PC2) algorithm for distributed computing in edge networks, where the Paillier homomorphic encryption is utilized to protect data privacy during interactions. Especially, a quantization method is introduced, which maps the real numbers to a positive integer interval without affecting the homomorphic operations. To address the architectural mismatch between large-integer and Graphics Processing Unit (GPU) computing, we transform high-bitwidth computations into low-bitwidth matrix and vector operations. Thus the GPU can be utilized to implement parallel encryption and decryption computations with long keys. Finally, a GPU-accelerated 3P-ADMM-PC2 is proposed to optimize the collaborative computing tasks. Meanwhile, large-scale computational tasks are conducted in network topologies with varying numbers of edge nodes. Experimental results demonstrate that the proposed 3P-ADMM-PC2 has excellent mean square error performance, which is close to that of distributed ADMM without privacy-preserving. Compared to centralized ADMM and distributed ADMM implemented with Central Processing Unit (CPU) computation, the proposed scheme demonstrates a significant speedup ratio.
\end{abstract}

\begin{IEEEkeywords}
Distributed ADMM, homomorphic encryption, privacy computing, GPU-accelerated computation, parallel encryption.
\end{IEEEkeywords}

\section{Introduction}
\IEEEPARstart{W}{ith} the large-scale application of machine-type communication system, massive data needs to be processed effectively. Moreover, high-resolution images are widely used in medical imaging, remote sensing, intelligent driving, unmanned aerial vehicle, etc, due to the rapid development of digital image techniques. Large amounts of data are used for centralized or distributed model training to effectively serve intelligent applications. In particular, some distributed learning and computing frameworks have been proposed, which can effectively utilize distributed data without compromising the privacy. For example, federated learning \cite{10599518, wu2024client}, secure multiparty computation \cite{li2024publicly} and edge or fog computing \cite{hua2023edge, wang2023wireless}. However, distributed learning or computing \cite{liu2023asynchronous} requires strong single-node computing power and additional parameter exchange overhead. Fortunately, most of the local data is image and video, which is sparse data \cite{wang2023image}. Local learning models also need to be sparse before transmission and processing \cite{tan2023threshold} to save transmission bandwidth. Besides, large-scale random access to network nodes leads to the sparsity of distributed network topology. Therefore, distributed computing combining sparse signal processing can be widely applied to solve local high-dimensional computation problems with sparsity prior information.

The Least Absolute Shrinkage and Selection Operator (LASSO) problem is generally applied to formulate the sparse signal recovery due to the sparse solution.
Greedy algorithm, such as orthogonal
matching pursuit (OMP) \cite{wimalajeewa2014omp}, can recover sparse signals well when the compressed-sensing restricted isometry
property (RIP) \cite{huang2024sparse} is satisfied, but the complexity of OMP-like algorithms is higher. In addition, the Bayesian estimation algorithm can be well designed by using sparse prior information. For the processing of massive distributed data, \cite{wimalajeewa2014omp} proposes a large number of distributed collaborative computing algorithms to approximate the performance of the central computing algorithm, without the need for a large amount of data transmission overhead and the disclosure of local data privacy. Specially, the ADMM as an effective distributed optimization method, has been widely applied in areas such as multi-robot collaboration \cite{halsted2021survey}, wireless communication control \cite{shen2012distributed, liu2021admm}, autonomous vehicle routing \cite{zhang2021semi}, image processing \cite{afonso2010augmented, almeida2013deconvolving}, and smart grid operations \cite{liu2023distributed, yang2022optimal, long2021efficient}. The ADMM decomposes the LASSO problem into several more tractable subproblems, allowing these to be executed in parallel across multiple edge devices, and ultimately achieves consensus by iteratively updating the global solution. ADMM is particularly effective for large-scale datasets with structured sparsity in the models.

\begin{table*}[htbp]
\centering
\caption{Existing Distributed Privacy Computing Schemes}
\label{Scheme-Comparison}
\begin{tabularx}{7in}{>{\centering\arraybackslash}m{0.9in}>{\centering\arraybackslash}m{1in}>{\centering\arraybackslash}m{0.4in}>{\centering\arraybackslash}m{0.4in}>{\centering\arraybackslash}m{0.75in}>{\centering\arraybackslash}m{0.75in}>{\centering\arraybackslash}m{0.8in}>{\centering\arraybackslash}m{0.75in}}
\toprule
        \multicolumn{1}{c|}{\multirow{2}{0.85in}{\centering\textbf{Computing Algorithm}}}
      & \multicolumn{1}{c|}{\multirow{2}{1in}{\centering\textbf{Existing Schemes}}}
      & \multicolumn{2}{c|}{\centering\textbf{Privacy Computing}}
      & \multicolumn{2}{c|}{\centering\textbf{Available GPU-Acceleration}}
      & \multicolumn{2}{>{\centering\arraybackslash}m{1.55in}}{\textbf{Available Security}} \\
\cmidrule(lr){3-4}\cmidrule(lr){5-6}\cmidrule(lr){7-8}
      \multicolumn{1}{c|}{} & \multicolumn{1}{c|}{} & DP & \multicolumn{1}{c|}{HE} & Matrix & \multicolumn{1}{c|}{ModExp} & Global privacy & Local privacy \\
\midrule
      \multirow{2}{0.85in}{\centering{Dis.-OMP}} & SPriFed-OMP \cite{lin2024unified} & \checkmark & & \checkmark & & \checkmark & \\

       & CM-Pai.-OMP \cite{wang2022qgtc} & & \checkmark & \checkmark & \checkmark & & \checkmark \\
\midrule
      \multirow{2}{0.85in}{\centering{Dis.-ADMM}} & DP-ADMM \cite{huang2019dp} & \checkmark & & \checkmark & & \checkmark & \\

      & scheme from \cite{zhang2018admm} & & \checkmark & \checkmark & \checkmark & \checkmark & \\
\midrule
      \multirow{2}{0.85in}{\centering{FL}} & NbAFL \cite{wei2020federated} & \checkmark & & \checkmark & & & \checkmark \\

      & scheme from \cite{yang2024privacy} & & \checkmark & \checkmark & & \checkmark & \checkmark \\
\midrule
      \multirow{2}{0.85in}{\centering{MPL}} & PEA \cite{ruan2023private} & \checkmark & & \checkmark & & & \checkmark \\

      & PE-MPDL \cite{gong2020privacy} & & \checkmark & \checkmark & \checkmark & & \checkmark \\
\midrule
      \multirow{2}{0.85in}{\centering{Fog-Com.}} & MaxDiff \cite{piao2019privacy} & \checkmark & & \checkmark & & & \checkmark \\

      & EEDHSC \cite{gupta2021energy} & & \checkmark & \checkmark & & & \checkmark \\
\midrule
      \textbf{Ours scheme} & 3P-ADMM-PC2 & & \checkmark & \checkmark & \checkmark & \checkmark & \checkmark \\
\bottomrule
\end{tabularx}
\end{table*}

In distributed computing, such as distributed OMP (Dis.-OMP), distributed ADMM (Dis.-ADMM), federated learning (FL), multi-party learning (MPL), and fog computing (Fog-Com.), multiple edge nodes collaborate to perform computations, significantly reducing computational overhead, especially for large-scale datasets. However, as shown in Table \ref{Scheme-Comparison}, these schemes initially paid less attention to data privacy protection during node interactions. Although federated learning and multi-party learning consider the privacy protection of local data, there is still a risk of privacy leakage.

In distributed computing employing differential privacy as a privacy-preserving mechanism, such as in SPriFed-OMP \cite{lin2024unified}, DP-ADMM \cite{huang2019dp}, NbAFL \cite{wei2020federated}, PEA \cite{ruan2023private}, and MaxDiff \cite{piao2019privacy}, the addition of noise may compromise the sparsity of the original signal. Moreover, introducing noise in each gradient update can adversely affect convergence and the accuracy of the final solution. In multi-party settings, noise parameters must be meticulously designed. Therefore, in scenarios demanding high-precision privacy protection, the use of homomorphic encryption schemes becomes indispensable. In distributed computing utilizing homomorphic encryption for privacy protection, such as in CM-Paillier-OMP \cite{wang2022qgtc}, schemes from \cite{zhang2018admm}, PE-MPDL \cite{gong2020privacy}, the substantial computational overhead induced by encryption results in inefficiency for large-scale matrix operations. Such methods are often limited to small-scale experiments due to the lack of effective acceleration schemes. The work in \cite{10589503} summarizes and analyzes the advantages and limitations of homomorphic encryption as privacy-preserving solutions in distributed computing in recent years, as well as related optimization problems. It also provides new research directions for homomorphic encryption applications in distributed computing. Furthermore, deployment on resource-constrained edge devices remains particularly challenging. The work in \cite{yang2024privacy}\cite{jiang2024lancelot} combines federated learning with the CKKS scheme to defend against Byzantine attacks, inference attacks and reconstruction attacks. However, CKKS \cite{cheon2017homomorphic} is an approximate decryption scheme with complex implementation, requiring relinearization or bootstrapping operations to reduce the impact of noise during homomorphic operations. The EEDHSC in \cite{gupta2021energy} combines fog computing with the ECC-ElGamal encryption scheme. Although the ECC-ElGamal scheme \cite{elgamal1985public} provides equivalent security to RSA encryption scheme with shorter key lengths, it supports only homomorphic multiplication operations. Additionally, there are homomorphic encryption schemes such as BFV \cite{brakerski2012fully} and BGN \cite{boneh2005evaluating}. Similar to CKKS, BFV has complex implementation requiring relinearization or bootstrapping operations. BGN supports only once homomorphic multiplication. Besides, among the aforementioned homomorphic encryption schemes, Paillier, BGN, ECC-ElGamal, and BFV can only encrypt unsigned integers. However, the Paillier homomorphic encryption scheme \cite{paillier1999public} is simple to implement and supports a limited number of homomorphic additions and constant multiplications, which aligns well with the computational tasks in this paper. Therefore, considering these factors, Paillier homomorphic encryption is used in this paper to protect data privacy during interactions.

In recent years, designing customized GPU computing paradigms for specific computational scenarios has become a research hotspot. The core idea involves hardware-software co-design, mapping computational patterns onto GPU parallel architectures to unleash extreme performance. For instance, Yu et al. proposed the TwinPilots paradigm \cite{yu2024twinpilots}. It dynamically schedules Transformer layer computations across GPU and CPU cores. An online load-balancing planner coordinates GPU data loading time with CPU computation time. This achieves efficient large language model inference on memory-constrained GPU servers. Wang et al. proposed the Q-GTC paradigm \cite{wang2022qgtc}, which achieves a several-fold speedup compared to conventional CUDA cores by mapping quantized graph neural network operations onto GPU Tensor Cores to leverage their capability in executing matrix operations at specific bit-widths. Shivdikar et al. accelerated fully homomorphic encryption (FHE) \cite{shivdikar2023gme} through GPU-based microarchitectural extensions. By introducing a CU-side hierarchical interconnect in the AMD CDNA GPU architecture, they managed to retain FHE ciphertexts in cache, eliminating redundant memory accesses and reducing latency. Additionally, they designed a MOD unit and 64-bit integer arithmetic cores to provide native modular arithmetic support, accelerating the most frequent modular reduction operations in FHE, while optimizing throughput via pipelining. They also proposed a Locality-Aware Block Scheduler to enhance scheduling efficiency by leveraging the temporal locality of FHE primitive blocks, further optimizing computational performance. Riazi et al. proposed the HEAX specialized architecture \cite{riazi2020heax}, which designs a GPU-like parallel architecture tailored for homomorphic encryption. Its computing units and memory hierarchy are custom-built for cryptographic primitives, enabling hardware-level acceleration of homomorphic encryption operations. However, in the particular field of homomorphic encryption, its computational characteristics exhibit a fundamental architectural mismatch with mainstream GPU paradigms, making the direct application of existing paradigms challenging.

Facing the above-mentioned problems, in this paper, we consider a distributed edge privacy-computing network and propose a 3P-ADMM-PC2 algorithm that integrates distributed parallel ADMM with the Paillier homomorphic encryption for sparse signal processing. The proposed 3P-ADMM-PC2 decomposes a high-dimensional computational problem into multiple low-dimensional subproblems for collaborative processing by multiple edge nodes, while also protecting data privacy during node interactions. To enable encryption and decryption of real numbers and achieve homomorphic operations, a quantization method is proposed, which maps real numbers to an unsigned integer range without affecting the homomorphic properties of the encryption scheme. Compared with traditional quantization methods \cite{zhang2018admm}, our method does not require additional space to represent negative numbers. However, a vector encryption is required to reduce the computing deday in bigdata. Existing homomorphic encryption schemes show lower parallelism during vector encryption and decryption due to their computational characteristics \cite{brakerski2012fully, boneh2005evaluating, paillier1999public, elgamal1985public}, which is the reason these schemes incur significant additional computational overhead. If multiple elements within a vector can be encrypted in parallel, it would significantly reduce the computational overhead. Feasible existing approach is to deploy encryption operations on GPU or FPGA \cite{che2023unifl, dong2022tegras} to achieve parallel encryption. However, according to the 754-2008 IEEE Standard for Floating-Point Arithmetic \cite{ieee2019ieee}, each GPU core supports only 32 bits or 64 bits operations. To ensure security, a key length of at least 1024-bit is required. Modular exponentiation (ModExp) of such large integers cannot be directly implemented on GPU or FPGA. A GPU-accelerated computation scheme is proposed, which decomposes computational tasks to enable multiple GPU cores to collaboratively process a single task, thereby achieving parallel operations such as encryption overall. Finally, we propose a GPU-accelerated 3P-ADMM-PC2 scheme, which enables edge nodes to collaborate with the master node for encryption and decryption computations, thereby reducing the computational burden on the master node. Summary, the contributions of this paper are as follows:
\begin{itemize}
    \item We propose a 3P-ADMM-PC2 algorithm for large-scale sparse signal processing, integrating distributed parallel ADMM with the Paillier homomorphic encryption. In particular, proposed 3P-ADMM-PC2 decomposes high-dimensional computational tasks into low-dimensional subtasks for parallel computation by multiple edge nodes, while protecting data privacy during node interactions.
    \item A quantization method and a GPU-accelerated computation scheme are proposed to address the plaintext limitations of homomorphic encryption and deploy encryption operations on GPU. To prevent overflow in the ModExp of large integers under long key length, the scheme transforms high-bitwidth computations into low-bitwidth matrix and vector operations. This approach fully utilizes the multi-core characteristics of GPU to achieve parallel privacy-preserving computations.
    \item To reduce the computational burden on the master node, we propose a GPU-accelerated 3P-ADMM-PC2 scheme. The master node and edge nodes perform ModExp in parallel on two smaller spaces, respectively. Compared with direct computation in a large space, our approach further reduces computational overhead and enables collaborative encryption and decryption between the master node and edge nodes.
\end{itemize}

The remainder of this paper is organized as follows. Section \uppercase\expandafter{\romannumeral2} introduces the system model and problem formulation. Section \uppercase\expandafter{\romannumeral3} introduces our proposed parallel collaborative ADMM privacy computing. Section \uppercase\expandafter{\romannumeral4} details our proposed adaptive GPU accelerated 3P-ADMM-PC2. Section \uppercase\expandafter{\romannumeral5} presents experimental results and Section \uppercase\expandafter{\romannumeral6} summarizes our work.
\begin{figure*}[tp]
\centering
\includegraphics[width=7.15in]{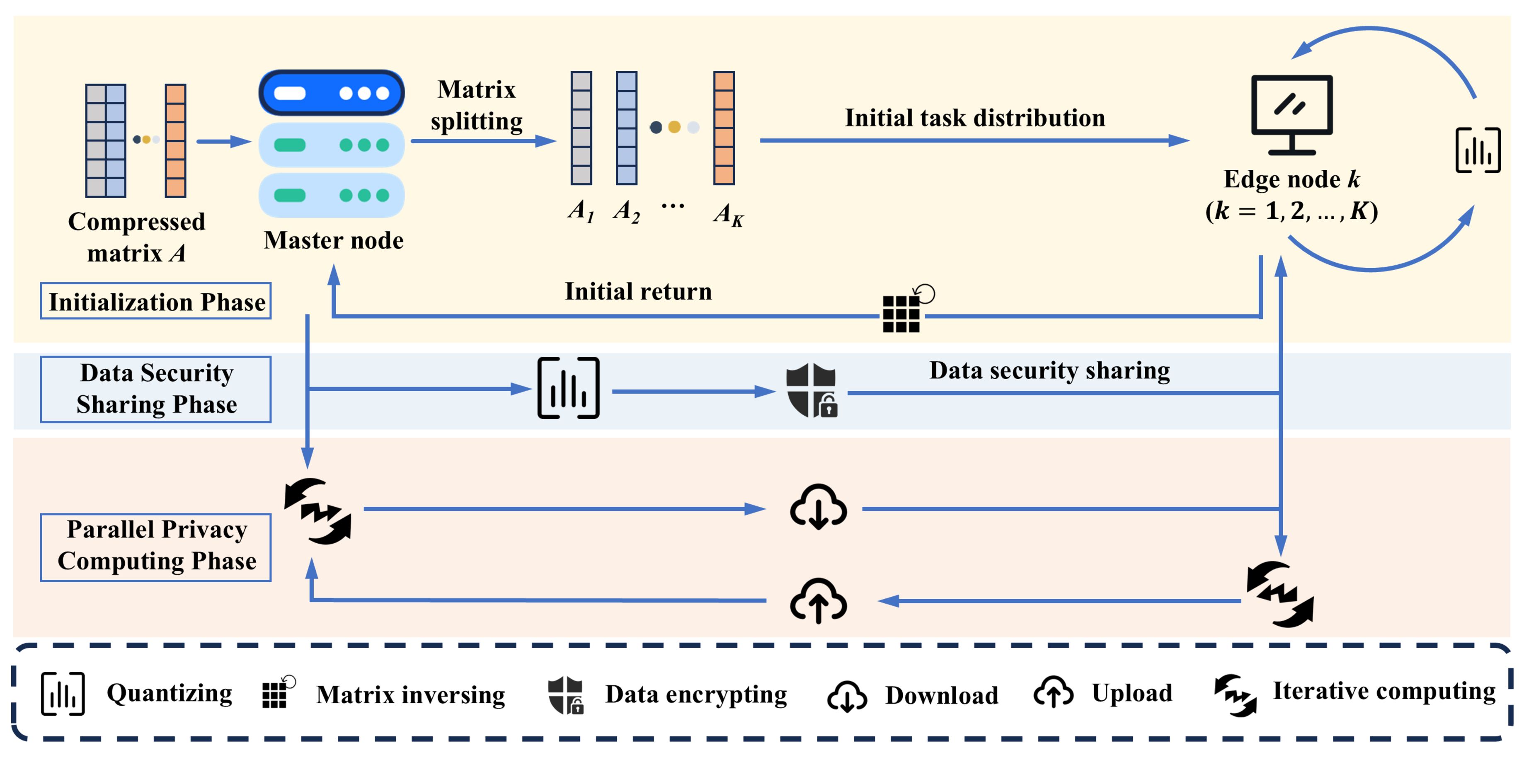}
\captionsetup{format=plain,justification=centering}
\caption{Proposed 3P-ADMM-PC2 scheme.}
\label{fig1}
\end{figure*}
\section{System Model and Problem Formulation}
Consider a distributed edge privacy-computing network, including one master node and $K$ incompletely trusted distributed nodes, where the master node borrows the computing power of other $K$ distributed nodes to compute a LASSO problem securely. In particular,
the LASSO problem is expressed as
\begin{equation}\label{sldo21}
\arg\min_{\mathbf{x}\in\mathbb{R}^N}\frac{1}{2}\left\| \mathbf{y}-\mathbf{A}\mathbf{x}  \right\|_{2}^{2}+\lambda {{\left\| \mathbf{x}  \right\|}_{1}},
\end{equation}
where $\mathbf{A}\in\mathbb{R}^{M\times N}(M\ll N)$ is the compressed matrix, $\mathbf{x}\in\mathbb{R}^N$ is a sparse vector. The vector $\mathbf{x}$ represents the global sparse signal to be jointly estimated from the observed data. In classical compressed sensing, $\mathbf{x}$ may denote the sparse coefficient vector of a natural image under a wavelet transform basis. In power grid reconstruction, $\mathbf{x}$ could represent the admittance or impedance parameters of a transmission line. $\mathbf{y}\in\mathbb{R}^M$ is the observation vector, $\lambda\geq 0$ is a regularization parameter.
Note that the problem \eqref{sldo21} can be further formulated as
\begin{subequations}\label{sldka}
\begin{align}
\min_{\mathbf{x} ,\mathbf{z}\in\mathbb{R}^N}\ & \frac{1}{2}\left\| \mathbf{y}-\mathbf{A}\mathbf{x}  \right\|_{2}^{2}+\lambda {{\left\| \mathbf{z}  \right\|}_{1}}, \\
&\mathrm{s.t.}\ \ \ \mathbf{x} -\mathbf{z} =0.
\end{align}
\end{subequations}
Then the corresponding augmented Lagrangian function is expressed as
\begin{align}\label{lakso2}
\mathcal{L}_{\rho }(\mathbf{x},\mathbf{z},\mathbf{u})&=\frac{1}{2}\left\| \mathbf{y}-\mathbf{A}\mathbf{x}  \right\|_{2}^{2}+\lambda {{\left\| \mathbf{z}  \right\|}_{1}}+{\mathbf{u}^{T}}(\mathbf{x} -\mathbf{z} ) \nonumber\\
  &+\frac{\rho }{2}\left\| \mathbf{x} -\mathbf{z}  \right\|_{2}^{2},
\end{align}
where $\mathbf{u}\in\mathbb{R}^N$ is a lagrangian multiplier vector, and $\rho\geq 0$ is a penalty parameter.
Based on the augmented Lagrangian function in \eqref{lakso2}, the iterative steps with respect to $\mathbf{x},\mathbf{z},\mathbf{v}$ for solving the LASSO problem can be expressed as
\begin{subequations}
\begin{align}
 &\mathbf{x}^{(t)}=(\mathbf{A}^T\mathbf{A}+\rho \mathbf{I}_N)^{-1}(\mathbf{A}^{T}\mathbf{y}+\rho (\mathbf{z}^{(t-1)}-\mathbf{v}^{(t-1)})),\label{sldo09}\\
 &\mathbf{z}^{(t)}={{S}_{\frac{\lambda }{\rho }}}(\mathbf{v}^{(t-1)}+\mathbf{x}^{(t)}),\label{ty32}\\
 &\mathbf{v}^{(t)}=\mathbf{v}^{(t-1)}+\mathbf{x}^{(t)}-\mathbf{z}^{(t)},\label{u78}
\end{align}
\end{subequations}
where $\mathbf{v}=\frac{1}{\rho}\mathbf{u}, \mathbf{I}_N \in \mathbb{R}^{N\times N}$ is a unit matrix, and $S_{\frac{\lambda}{\rho}}( \cdot )$ denotes the soft threshold function. It is important to point out that the complexity of \eqref{sldo09} is $\mathcal{O}(N^3)$. The complexity limits the application of ADMM, especially machine learning and large-scale signal processing.

To efficiently compute iterations of LASSO problem, the master node cooperates with $K$ distributed incompletely trusted nodes.
In particular, $\mathbf{x}$ is divided into $\mathbf{x}=(\mathbf{x}^T_1,\mathbf{x}^T_2,\cdots,\mathbf{x}^T_K)^T$, where $\mathbf{x}_k^T\in\mathbb{R}^{N_k}$, $N=\sum_{k=1}^KN_k$. The compressed matrix $\mathbf{A}$ is expressed as $\mathbf{A}=(\mathbf{A}_1,\mathbf{A}_2,\cdots,\mathbf{A}_K)$, where $\mathbf{A}_k\in\mathbb{R}^{M\times N_k}$. Then, the subproblem for solving $\mathbf{x}$ in \eqref{lakso2} is expressed as
\begin{align}\label{lop32}
&\!\!\!\!\arg\min_{\mathbf{x}}\frac{1}{2}\|\mathbf{y}-\mathbf{A}\mathbf{x}\|_2^2+\mathbf{u}^T(\mathbf{x}-\mathbf{z})+\frac{\rho}{2}\|\mathbf{x}-\mathbf{z}\|_2^2\nonumber\\
&\!\!\!\!=\arg\min_{\mathbf{x}}\frac{1}{2}\|\mathbf{y}-\mathbf{A}\mathbf{x}\|_2^2+\frac{\rho}{2}\left\|\mathbf{x}-\mathbf{z}+\frac{\mathbf{u}}{\rho}\right\|_2^2\nonumber\\
&\!\!\!\!=\arg\min_{\mathbf{x}_k}\frac{1}{2}\left\|\mathbf{y}-\!\!\sum_{k=1}^K\mathbf{A}_k\mathbf{x}_k\right\|_2^2\!\!\!\!+\frac{\rho}{2}\sum_{k=1}^K\left\|\mathbf{x}_k-\mathbf{z}_k+\!\frac{\mathbf{u}_k}{\rho}\right\|_2^2,
\end{align}
where we use the fact that $\mathbf{z}=(\mathbf{z}_1^T, \mathbf{z}_2^T, \cdots, \mathbf{z}_K^T)^T$ and $\mathbf{u}=(\mathbf{u}^T_1, \mathbf{u}^T_2, \cdots, \mathbf{u}_K^T)^T$.

Note that the first term of the objective function in \eqref{lop32} couplings $K$ variables, resulting in a low efficiency solution. Therefore, the upper bound of the first term in \eqref{lop32} is given by
\begin{align}
\left\|\mathbf{y}-\sum_{k=1}^K\mathbf{A}_k\mathbf{x}_k\right\|_2^2\leq \sum_{k=1}^K\left\|\frac{1}{K}\mathbf{y}-\mathbf{A}_k\mathbf{x}_k\right\|_2^2.
\end{align}

Problem \eqref{lop32} is further approximated as
\begin{align}\label{etlop32}
\arg\min_{\mathbf{x}_k}\frac{1}{2}\!\sum_{k=1}^K\left\|\frac{\mathbf{y}}{K}\!-\!\mathbf{A}_k\mathbf{x}_k\right\|_2^2\!+\frac{\rho}{2}\sum_{k=1}^K\left\|\mathbf{x}_k\!-\mathbf{z}_k+\!\frac{\mathbf{u}_k}{\rho}\right\|_2^2,
\end{align}
where the variables are separable. Thus, problem \eqref{etlop32} can be translated into $K$ subproblems:
\begin{align}\label{subetlop32}
\arg\min_{\mathbf{x}_k}\frac{1}{2}\!\left\|\frac{\mathbf{y}}{K}\!-\!\mathbf{A}_k\mathbf{x}_k\right\|_2^2\!+\frac{\rho}{2}\left\|\mathbf{x}_k\!-\mathbf{z}_k+\!\frac{\mathbf{u}_k}{\rho}\right\|_2^2, \forall k.
\end{align}

Similar to \eqref{sldo09}, the solution of \eqref{subetlop32} is expressed as
\begin{align}\label{sldo091}
\mathbf{x}_k^{(t)}\!=(\mathbf{A}_k^T\mathbf{A}_k+\rho \mathbf{I}_{N_k})^{-1}(\mathbf{A}_k^{T}\mathbf{y}+\rho (\mathbf{z}_k^{(t-1)}\!-\mathbf{v}_k^{(t-1)})).
\end{align}

According to \eqref{sldo091}, \eqref{ty32} and \eqref{u78}, the LASSO problem in \eqref{sldo21} can be solved by alternate iteration. To realize the synchronous computation, the iterations \eqref{sldo091}, \eqref{ty32} and \eqref{u78} are further expressed as follows:
\begin{subequations}\label{loi32}
\begin{align}
&\mathbf{x}_k^{(t)}=(\mathbf{A}_k^T\mathbf{A}_k+\rho \mathbf{I}_{N_k})^{-1}(\mathbf{A}_k^{T}\mathbf{y}+\rho (\mathbf{z}_k^{(t-1)}-\mathbf{v}_k^{(t-1)})), \forall k,\label{sldi231} \\
&\mathbf{z}^{(t)}={{S}_{\frac{\lambda }{\rho }}}(\mathbf{v}^{(t-1)}+\mathbf{x}^{(t-1)}), \label{sldo34}\\
&\mathbf{v}^{(t)}=\mathbf{v}^{(t-1)}+\mathbf{x}^{(t-1)}-\mathbf{z}^{(t)}.\label{iu43}
\end{align}
\end{subequations}

It is noted that the updates of $\mathbf{x}_k^{(t)}$, $\mathbf{z}^{(t)}$, and $\mathbf{v}^{(t)}$ in the $t$-th iteration is based on the $(t-1)$-th iteration.
As shown in Fig. \ref{fig1}, we propose a 3P-ADMM-PC2 scheme, which is summarized as follows:
\begin{itemize}
  \item {\bf{Initialization Phase}}: In this phase, the master node splits the computing task and sends them to different edge nodes. And the edge nodes return their initial calculated values to the master node.

  {\bf{Master node}}: The master node splits $\mathbf{A}, \mathbf{z}, \mathbf{v}$ in \eqref{sldo09} into $K$ parts, i.e., $\mathbf{A}=(\mathbf{A}_1,\mathbf{A}_2,\cdots, \mathbf{A}_K)$, $\mathbf{z}=(\mathbf{z}^T_1,\mathbf{z}^T_2,\cdots,\mathbf{z}^T_K)^T$, $\mathbf{v}=(\mathbf{v}^T_1,\mathbf{v}^T_2,\cdots,\mathbf{v}^T_K)^T$, and successively transmits  $\alpha_k=\{\mathbf{A}_k^{T}\mathbf{A}_k, \rho\}$ to edge node $k, k=1,2,\cdots, K$.

  {\bf{Edge node}}: Edge node $k$ computes $\mathbf{B}_k=(\mathbf{A}_k^T\mathbf{A}_k+\rho \mathbf{I}_{N_k})^{-1}\in\mathbb{R}^{N_k\times N_k}$ and returns it to the master node. Meanwhile,
  edge node $k$ saves the quantization of $\mathbf{B}_k\rho$ as $\bar{\mathbf{B}}_k$.
  \item {\bf{Data Security Sharing Phase}}: The observation $\mathbf{y}$ in \eqref{sldo21} is sensitive information. To prevent information leakage, the master node quantizes and encrypts the information containing $\mathbf{y}$ before sharing it.

       {\bf{Master node}}: The master node quantizes and encrypts $\mathbf{B}_k(\mathbf{A}_k^T\mathbf{y})\in\mathbb{R}^{N_k}$, i.e.,
      \begin{align}\label{sldot2}
      \hat{\boldsymbol{\alpha}}_k = f_{en}(\Gamma_{1}(\mathbf{B}_k\mathbf{A}_k^T\mathbf{y})\in\mathbb{Z}^{N_k}, \forall k,
      \end{align}
      where $\Gamma_{i}(\cdot)$ and $f_{en}(\cdot)$ are the quantization function and the encryption function, respectively. 

      {\bf{Edge node}}: The edge node $k$ downloads and saves $\hat{\boldsymbol{\alpha}}_k$ for the computing task.
  \item {\bf{Parallel Privacy-Computing Phase}}: In this phase, the information in \eqref{sldo09} is calculated and exchanged safely between the master node and the distributed edge nodes.

  {\bf{Master node}}: At the master node, $\mathbf{z}^{(t)}$ and $\mathbf{-v}^{(t)}$ are updated according to \eqref{sldo34} and \eqref{iu43} with $f_{de}(\hat{\mathbf{x}}^{(t)})$, where $f_{de}(\cdot)$ is the decryption function. Note that the updates of $\mathbf{z}^{(t)}, \mathbf{-v}^{(t-1)}$, and the collection of $\{\hat{\mathbf{x}}_k\}_{k=1}^K$. $\mathbf{-v}^{(t)}$ and $\hat{\mathbf{x}}^{(t)}$ require the $(t-1)$-th updates $\mathbf{z}^{(t-1)}$. To safely calculate $\mathbf{x}_k^{(t)}$ in \eqref{sldi231}, $\mathbf{z}^{(t)}$ and $\mathbf{-v}^{(t)}$ are quantized and encrypted, resulting in
      \begin{align}\label{uyt12}
      \hat{\mathbf{z}}_{k}^{(t)}=f_{en}(\Gamma_{2}(\mathbf{z}_k^{(t)})), \hat{\mathbf{v}}_k^{(t)}=f_{en}(\Gamma_{2}(-\mathbf{v}_k^{(t)})), \forall k,
      \end{align}
      and $\hat{\mathbf{z}}_{k}^{(t)}, \hat{\mathbf{v}}_k^{(t)}$ are transmitted to edge node $k$.

      {\bf{Edge node}}: At edge node $k$, a calculated value of $\mathbf{x}_k^{(t)}$ in \eqref{sldi231} is obtained based on $\hat{\mathbf{z}}_{k}^{(t-1)}$ and $\hat{\mathbf{v}}_k^{(t-1)}$, i.e.,
      \begin{align}\label{ldofi2}
      \hat{\mathbf{x}}_k^{(t)}=\hat{\boldsymbol{\alpha}}_k +\Gamma_{2}(\bar{\mathbf{B}}_k)(\hat{\mathbf{z}}_{k}^{(t-1)}+(-\hat{\mathbf{v}}_k^{(t-1)})).
      \end{align}
      Then $\hat{\mathbf{x}}_k^{(t)}$ is uploaded to the master node.
\end{itemize}

Note that $\mathbf{B}_k(\mathbf{A}_k^T\mathbf{y})$, $\mathbf{z}_{k}^{(t)}$, and $\mathbf{-v}_{k}^{(t)}$ are encrypted respectively at master node. Each edge node estimates $\hat{\mathbf{x}}_k^{(t)}$ in \eqref{ldofi2} with ciphertexts $\hat{\boldsymbol{\alpha}}_k$,
$\hat{\mathbf{z}}_k^{(t)}$, and $\hat{\mathbf{v}}_k^{(t)}$. It is important to point out that calculation of $\hat{\mathbf{x}}_k^{(t)}$ in \eqref{ldofi2} consists of the addition computation and the multiply-by-a-constant computation. In addition, $\mathbf{B}_k(\mathbf{A}_k^T\mathbf{y})$, $\mathbf{z}_{k}^{(t)}$, and $\mathbf{-v}_{k}^{(t)}$ are real vectors, whose elements are the real numbers. Especially, the positive-negative characteristic and floating-point types are challenge to encryption and efficient computing.

\section{Parallel Collaborative ADMM Privacy Computing}
In this section, we introduce proposed 3P-ADMM-PC2 scheme for distributed computation. In particular, a quantization is first designed by mapping the real number to a certain range of positive integers. Furthermore, we reveal that the designed quantizer does not affect the homomorphic properties of Paillier-based ADMM privacy computing.

\subsection{Quantization of Floating Point Type Data}
Note that the elements in $\mathbf{B}_k(\mathbf{A}_k^T\mathbf{y})$, $\bar{\mathbf{B}}_k$, $\mathbf{z}_{k}^{(t)}$, $\mathbf{-v}_{k}^{(t)}$ are the floating point type data, which can not be encrypted directly. Thus, we first quantize $\mathbf{B}_k(\mathbf{A}_k^T\mathbf{y})$, $\bar{\mathbf{B}}_k$, $\mathbf{z}_{k}^{(t)}$,
$\mathbf{-v}_{k}^{(t)}$, to $\Gamma_{1}(\mathbf{B}_k(\mathbf{A}_k^T\mathbf{y}))$, $\Gamma_{2}(\bar{\mathbf{B}}_k)$, $\Gamma_{2}(\mathbf{z}_{k}^{(t)})$, $\Gamma_{2}(\mathbf{-v}_{k}^{(t)})$, expressed as

\begin{subequations}
\begin{align}
\Gamma_{1}(\mathbf{B}_k(\mathbf{A}_k^T\mathbf{y})) &= \left\lfloor {\frac{\Delta^{2}(\mathbf{B}_k(\mathbf{A}_k^T\mathbf{y}) - (\mathbf{B}_k(\mathbf{A}_k^T\mathbf{y}))_{\min}\mathbf{1}_{N_k})}{((\mathbf{B}_k(\mathbf{A}_k^T\mathbf{y}))_{\max} - (\mathbf{B}_k(\mathbf{A}_k^T\mathbf{y}))_{\min})^{2}}} \right\rceil,\label{et21}\\
\Gamma_{2}(\bar{\mathbf{B}}_k) &= \left\lfloor {\frac{\Delta(\bar{\mathbf{B}}_k - (\bar{\mathbf{B}}_k)_{\min}\mathbf{1}_{N_k\times N_k})}{(\bar{\mathbf{B}}_k)_{\max} - (\bar{\mathbf{B}}_k)_{\min}}} \right\rceil,\\
\Gamma_{2}(\mathbf{z}_{k}^{(t)}) &= \left\lfloor {\frac{\Delta(\mathbf{z}_{k}^{(t)} - (\mathbf{z}_{k}^{(t)})_{\min}\mathbf{1}_{N_k})}{(\mathbf{z}_{k}^{(t)})_{\max} - (\mathbf{z}_{k}^{(t)})_{\min}}} \right\rceil,\\
\Gamma_{2}(\mathbf{-v}_{k}^{(t)}) &= \left\lfloor {\frac{\Delta(\mathbf{-v}_{k}^{(t)} - (-\mathbf{v}_{k}^{(t)})_{\min}\mathbf{1}_{N_k})}{(-\mathbf{v}_{k}^{(t)})_{\max} - (-\mathbf{v}_{k}^{(t)})_{\min}}} \right\rceil,
\end{align}
\end{subequations}
where $\mathbf{1}_{N_k}$ is all one vector, $\Delta>0$, $(\cdot)_{\max}$ denotes the maximum element of a vector or matrix, $(\cdot)_{\min}$ denotes the minimum element, and $\left\lfloor \cdot \right\rceil$ denotes rounding.

\begin{remark}
   The quantization function $\Gamma_{2}(\cdot)$ maps the element from the $N_k$ dimensional real number field $\mathbb{R}^{N_k}$ to an integer set $\{0,1,2,\cdots, \Delta\}^{N_k}$. Especially, the negative real numbers are mapped to non-negative integers, which can be encrypted by the Paillier algorithm. To ensure that the quantization scheme does not compromise the homomorphic properties of the subsequent encryption scheme, $\Gamma_{1}(\cdot)$, unlike $\Gamma_{2}(\cdot)$, maps the element from the $N_k$ dimensional real number field $\mathbb{R}^{N_k}$ to an integer set $\{0,1,2,\cdots, \lfloor \frac{\Delta^{2}}{max - min} \rceil \}^{N_k}$.
\end{remark}

\subsection{Encryption and Distributed ADMM Privacy Computation}

Note that the elements of $\mathbf{B}_k(\mathbf{A}_k^T\mathbf{y})$, $\mathbf{z}_{k}^{(t)}$, $\mathbf{-v}_{k}^{(t)}$ are quantized to the positive integers, then the  Paillier encryption and decryption algorithm can be used for the parallel collaborative ADMM privacy computing, where the key generation at the master node is presented as follows:
\begin{itemize}
    \item Choose two large prime numbers $p$, $q$.

    \item Computing $n = pq, \epsilon = \mathrm{lcm}(p - 1,q - 1)$.

    \item Select an integer $g \in \mathbb{Z}_{n^{2}}^{*}, \mathrm{gcd}(g, n)=1$.

    \item Select an integer $r \in \mathbb{Z}^{*}_{n}, \mathrm{gcd}(r, n)=1$.

    \item Define the function $\mathrm{L}(x) = \frac{x - 1}{n}$.

    \item Computing $\mu = (\mathrm{L}(g^{\epsilon}~\mathrm{mod}~n^{2}))^{- 1}~\mathrm{mod}~n$.
\end{itemize}
where $\mathrm{lcm}(a,b)$ denotes the least common multiple of $a$ and $b$, and $\mathrm{gcd}(a,b)$ denotes the greatest common divisor of $a$ and $b$.
The public key is $(n,g)$ and the private key is $(\epsilon, \mu)$.

At the master node, the quantized $\Gamma_{1}(\mathbf{B}_k(\mathbf{A}_k^T\mathbf{y}))$, $\Gamma_{2}(\mathbf{z}_{k}^{(t)})$, $\Gamma_{2}(\mathbf{-v}_{k}^{(t)})$ in \eqref{sldot2} and \eqref{uyt12} are encrypted as follows:
\begin{subequations}
\begin{align}
    \hat{\boldsymbol{\alpha}}_k&=f_{en}(\Gamma_{1}(\mathbf{B}_k(\mathbf{A}_k^T\mathbf{y})))=g^{\Gamma_{1}(\mathbf{B}_k(\mathbf{A}_k^T\mathbf{y}))}{\mathbf{r}}_{1}^{n} \:\mathrm{mod}\:n^{2},\\
    \hat{\mathbf{z}}_{k}^{(t)}&=f_{en}(\Gamma_{2}(\mathbf{z}_{k}^{(t)}))=g^{\Gamma_{2}(\mathbf{z}_{k}^{(t)})}{\mathbf{r}}_{2}^{n} \:\mathrm{mod}\:n^{2},\label{loi32}\\
   -\hat{\mathbf{v}}_{k}^{(t)}&=f_{en}(\Gamma_{2}(\mathbf{-v}_{k}^{(t)}))=g^{\Gamma_{2}(\mathbf{-v}_{k}^{(t)})}{\mathbf{r}}_{3}^{n} \:\mathrm{mod}\:n^{2},\label{loi321}
\end{align}
\label{itq1}
\end{subequations}

\noindent where ${\mathbf{r}}_{1}, {\mathbf{r}}_{2}, {\mathbf{r}}_{3} \in (\mathbb{Z}^{*}_{n})^{{N}_{k}}$. The elements of these sets correspond to different $r$ values, which are used for encrypting each plaintext.

It is important to point out that the Paillier algorithm is considered as a encryption function in \eqref{sldot2} and \eqref{uyt12} due to the homomorphic operations of ciphertexts are involved in \eqref{ldofi2}.
However, the quantization error will impact the estimation accuracy of $\hat{\mathbf{x}}_k^{(t)}$ in \eqref{ldofi2}, which is calculated based on ciphertext operations. In particular, calculation of $\hat{\mathbf{x}}_k^{(t)}$ in \eqref{ldofi2}
consists of the addition computation and the multiply-by-a-constant computation. Then we have the following homomorphism definitions of the parallel privacy computing phase of proposed 3P-ADMM-PC2 scheme.

\begin{definition}
   For any two ciphertexts $\mathbf{c}_{1} = f_{en}(\Gamma_{2}(\mathbf{z}_{k}^{(t)}))$, $\mathbf{c}_{2} = f_{en}(\Gamma_{2}(\mathbf{-v}_{k}^{(t)}))$, the addition operation $\oplus $ between ciphertexts is defined as
   \begin{align}\label{headd}
   f_{en}(\Gamma_{2}(\mathbf{z}_{k}^{(t)})) \oplus f_{en}(\Gamma_{2}(\mathbf{-v}_{k}^{(t)}))
   &= \mathbf{c}_{1}\mathbf{c}_{2}~\mathrm{mod}~n^{2} \nonumber\\
   = f_{en}(\Gamma_{2}(\mathbf{z}_{k}^{(t)}) & +\Gamma_{2}(\mathbf{-v}_{k}^{(t)})~\mathrm{mod}~n).
   \end{align}
\end{definition}

\begin{definition}
   For $\Gamma_{2}(\bar{\mathbf{B}}_k), \mathbf{c} = f_{en}(\Gamma_{2}(\mathbf{z}_{k}^{(t)})+\Gamma_{2}(\mathbf{-v}_{k}^{(t)}))$, the multiplication operation is defined as
   \begin{align}\label{hemul}
   \Gamma_{2}&(\bar{\mathbf{B}}_k) \otimes f_{en}(\Gamma_{2}(\mathbf{z}_{k}^{(t)})+\Gamma_{2}(\mathbf{-v}_{k}^{(t)}))
   = \mathbf{c}^{\Gamma_{2}(\bar{\mathbf{B}}_k)}\:\mathrm{mod}\:n^{2} \nonumber\\
   &= f_{en}(\Gamma_{2}(\bar{\mathbf{B}}_k)\cdot(\Gamma_{2}(\mathbf{z}_{k}^{(t)})+\Gamma_{2}(\mathbf{-v}_{k}^{(t)}))\:\mathrm{mod}\:n).
   \end{align}
\end{definition}

Based on Definition 1 and Definition 2, $\hat{\mathbf{x}}_k^{(t)}$ in \eqref{ldofi2} is calculated as
\begin{align}\label{itq2}
    &\hat{\mathbf{x}}_k^{(t)}=\hat{\boldsymbol{\alpha}}_k \oplus \Gamma_{2}(\bar{\mathbf{B}}_k) \otimes (\hat{\mathbf{z}}_{k}^{(t-1)}\oplus(-\hat{\mathbf{v}}_k^{(t-1)}))\nonumber\\
    &=\hat{\boldsymbol{\alpha}}_k \oplus (f_{en}(\Gamma_{2}(\bar{\mathbf{B}}_k)\cdot(\Gamma_{2}(\mathbf{z}_{k}^{(t)})+\Gamma_{2}(\mathbf{-v}_{k}^{(t)}))~\mathrm{mod} ~n))\nonumber\\
    &=f_{en}(\Gamma_{1}(\mathbf{B}_k(\mathbf{A}_k^T\mathbf{y}))+\Gamma_{2}(\bar{\mathbf{B}}_k)(\Gamma_{2}(\mathbf{z}_{k}^{(t)})\nonumber\\
    &+\Gamma_{2}(\mathbf{-v}_{k}^{(t)}))~\mathrm{mod}~n).
\end{align}

It is worth noting that the information for encryption in \eqref{itq2} is given by
\begin{align}\label{ret54}
\!\!\!\Gamma_{1}(\mathbf{B}_k(\mathbf{A}_k^T\mathbf{y}))+\!\Gamma_{2}(\bar{\mathbf{B}}_k)(\Gamma_{2}(\mathbf{z}_{k}^{(t)})+\!\Gamma_{2}(\mathbf{-v}_{k}^{(t)}))~\mathrm{mod}~n
\end{align}
containing two additions and one multiplication. Note that the elements of $\Gamma_{1}(\mathbf{B}_k(\mathbf{A}_k^T\mathbf{y}))$, $\Gamma_{2}(\bar{\mathbf{B}}_k)$, $\Gamma_{2}(\mathbf{z}_{k}^{(t)})$, $\Gamma_{2}(\mathbf{-v}_{k}^{(t)})$ are nonnegative integers.
The elements of $\mathbf{B}_k(\mathbf{A}_k^T\mathbf{y})$, $\bar{\mathbf{B}}_k$, $\mathbf{z}_{k}^{(t)}$, $\mathbf{-v}_{k}^{(t)}$ are real numbers. To evaluate the estimation performance of decryption and inverse quantization in \eqref{itq2}, we have the following theorem, which simplifies $\mathbf{B}_k(\mathbf{A}_k^T\mathbf{y})$, $\bar{\mathbf{B}}_k$, $\mathbf{z}_{k}^{(t)}$, $\mathbf{-v}_{k}^{(t)}$ to $\mathbf{u}_1, \mathbf{u}_2, \mathbf{u}_{3}\in\mathbb{R}^N, \mathbf{B}\in\mathbb{R}^{N\times N}$.
\begin{theorem}\label{lot}
For quantization $\Gamma_1, \Gamma_2$,  and $\mathbf{u}_1, \mathbf{u}_2, \mathbf{u}_{3}\in\mathbb{R}^N, \mathbf{B}\in\mathbb{R}^{N\times N}$, the quantization calculation in \eqref{ret54} can be given by
\begin{align}\label{yuyt2}
&\Gamma_{1}(\mathbf{u}_{3}) + \Gamma_2(\mathbf{B})(\Gamma_2(\mathbf{u}_{1})+\Gamma_2(\mathbf{u}_{2})) = \nonumber \\
&\left\lfloor \Delta^{2}
\frac{\mathbf{u}_{3}+\mathbf{B}(\mathbf{u}_{1} + \mathbf{u}_{2}) - (2\mathbf{B}\mathbf{1}_{N} + \mathbf{u}_{1} + \mathbf{u}_{2} + 1)z_{\min}}{(z_{\max} - z_{\min} )^{2}} \right. \nonumber \\
&  \left. + {{\Delta}^{2}\frac{2z_{\min}^{2}\mathbf{1}_{N}}{\left( z_{\max} - z_{\min} \right)^{2}}} \right\rceil,
\end{align}
which can be further approximate by
\begin{align}\label{loit3}
&\mathbf{u}_{3}+\mathbf{B}(\mathbf{u}_{1} + \mathbf{u}_{2}) \approx \nonumber \\
&\frac{(\Gamma_{1}(\mathbf{u}_{3}) + \Gamma_2(\mathbf{B})(\Gamma_2(\mathbf{u}_{1})+\Gamma_2(\mathbf{u}_{2})))(z_{\max} - z_{\min} )^{2}}{\Delta^{2}} \nonumber \nonumber \\
&+ (2\mathbf{B}\mathbf{1}_{N} + \mathbf{u}_{1} + \mathbf{u}_{2} + 1)z_{\min} - 2z_{\min}^{2}\mathbf{1}_{N}.
\end{align}
\end{theorem}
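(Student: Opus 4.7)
The plan is to verify (19) by direct expansion: substitute the defining formulas of $\Gamma_1$ and $\Gamma_2$ into the left-hand side, temporarily drop the inner rounding (with the understanding that its discrepancy is absorbed by the outer floor $\lfloor \cdot \rceil$), and then collect terms until the argument matches that on the right-hand side. The approximation (20) then follows by an elementary algebraic inversion of (19).

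First I would write, prior to rounding, $\Gamma_1(\mathbf{u}_3) \approx \Delta^{2}(\mathbf{u}_3 - z_{\min}\mathbf{1}_N)/(z_{\max}-z_{\min})^{2}$ and $\Gamma_2(\mathbf{w}) \approx \Delta(\mathbf{w} - z_{\min}\mathbf{1})/(z_{\max}-z_{\min})$ for each of $\mathbf{w}\in\{\mathbf{u}_1,\mathbf{u}_2,\mathbf{B}\}$, using the common constants $z_{\min},z_{\max}$ stipulated in the statement. Linearity of these affine maps gives $\Gamma_2(\mathbf{u}_1)+\Gamma_2(\mathbf{u}_2) \approx \Delta(\mathbf{u}_1+\mathbf{u}_2-2z_{\min}\mathbf{1}_N)/(z_{\max}-z_{\min})$, so that the product $\Gamma_2(\mathbf{B})(\Gamma_2(\mathbf{u}_1)+\Gamma_2(\mathbf{u}_2))$ contributes, up to the shared factor $\Delta^{2}/(z_{\max}-z_{\min})^{2}$, the four terms $\mathbf{B}(\mathbf{u}_1+\mathbf{u}_2)$, $-2z_{\min}\mathbf{B}\mathbf{1}_N$, $-z_{\min}(\mathbf{u}_1+\mathbf{u}_2)$, and $+2z_{\min}^{2}\mathbf{1}_N$.

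Next I would add $\Gamma_1(\mathbf{u}_3)$, which brings in $\mathbf{u}_3$ and an additional $-z_{\min}\mathbf{1}_N$ correction. Combining the two $z_{\min}$-linear contributions produces exactly the coefficient $-(2\mathbf{B}\mathbf{1}_N + \mathbf{u}_1 + \mathbf{u}_2 + \mathbf{1}_N)z_{\min}$ appearing in (19) (the ``$+1$'' being precisely the $\mathbf{1}_N$ inherited from $\Gamma_1(\mathbf{u}_3)$), while the $z_{\min}^{2}$ contribution matches the $+2z_{\min}^{2}\mathbf{1}_N/(z_{\max}-z_{\min})^{2}$ term that the statement has pulled out as a separate summand. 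Reinstating the outer rounding operator then yields (19). For (20), I would multiply both sides by $(z_{\max}-z_{\min})^{2}/\Delta^{2}$, transpose the constant-offset terms, and convert the equality to an approximation to account for the rounding that was suppressed.

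The main obstacle is bookkeeping around the scalar-matrix offset in $\Gamma_2(\mathbf{B})$ and the implicit assumption that $\mathbf{u}_1,\mathbf{u}_2,\mathbf{u}_3,\mathbf{B}$ share common normalization constants $z_{\min},z_{\max}$: strictly, each object is quantized with its own extrema, and the $(\mathbf{1}_N^{T}(\mathbf{u}_1+\mathbf{u}_2))\mathbf{1}_N$ structure produced by $\mathbf{B}\mathbf{J}$-type cross-terms only collapses to the elementwise $(\mathbf{u}_1+\mathbf{u}_2)z_{\min}$ appearing on the RHS under the componentwise-offset convention used by the paper. I would state this convention explicitly before expanding, since it is what makes the identity clean. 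A secondary difficulty is that quantization is not exactly linear because of the inner rounding; I would argue that the per-component rounding errors are $O(1)$ and thus absorbed by the single outer $\lfloor\cdot\rceil$, which is also why (20) is stated with $\approx$ rather than $=$.
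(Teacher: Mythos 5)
Your proposal follows essentially the same route as the paper's proof: substitute the affine quantizer formulas with common extrema $z_{\min},z_{\max}$, form the sum $\Gamma_2(\mathbf{u}_1)+\Gamma_2(\mathbf{u}_2)$, multiply by $\Gamma_2(\mathbf{B})$, add $\Gamma_1(\mathbf{u}_3)$, and reinstate the outer rounding, with \eqref{loit3} obtained by algebraic inversion of \eqref{yuyt2}. The matrix-offset subtlety you flag --- that the $\mathbf{B}-z_{\min}\mathbf{1}_{N\times N}$ cross-terms produce quantities of the form $(\mathbf{1}_N^{T}(\mathbf{u}_1+\mathbf{u}_2))\mathbf{1}_N$ rather than the elementwise $(\mathbf{u}_1+\mathbf{u}_2)z_{\min}$ appearing in \eqref{yuyt2} --- is genuine and is silently elided in the paper's own derivation, so stating the componentwise-offset convention explicitly as you propose only strengthens the argument.
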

\begin{proof}
Without loss of generality, we use the close set $[z_{\min},z_{\max}]$ to approximate the open set $\mathbb{R}$. Then we have
\begin{align}
&[z_{\min},z_{\max}]^N\subset \mathbb{R}^N,\\
&[z_{\min},z_{\max}]^N\times [z_{\min},z_{\max}]^N\subset \mathbb{R}^{N\times N}.
\end{align}
The quantization of $\mathbf{u}_{1} \in \mathbb{R}^N$ can be expressed as
\begin{equation}
\Gamma_{2}(\mathbf{u}_{1}) = \left\lfloor {\Delta\frac{\mathbf{u}_{1} - z_{\min}\mathbf{1}_N}{z_{\max} - z_{\min}}} \right\rceil ,
\end{equation}
the quantization of $\mathbf{u}_{2} \in \mathbb{R}^N$ and $\mathbf{B} \in \mathbb{R}^{M \times N}$ is the same as that of $\mathbf{u}_{1}$. The quantization of $\mathbf{u}_{3} \in \mathbb{R}^N$ can be expressed as
\begin{equation}
\Gamma_{1}(\mathbf{u}_{3}) = \left\lfloor {{\Delta}^{2}\frac{\mathbf{u}_{3} - z_{\min}\mathbf{1}_N}{(z_{\max} - z_{\min})^{2}}} \right\rceil.
\end{equation}

The quantized $\Gamma_{2}(\mathbf{u}_{1})$, $\Gamma_{2}(\mathbf{u}_{2})$ perform an addition operation, resulting in
\begin{equation}\label{oit2}
\Gamma_{2}(\mathbf{u}_{1}) + \Gamma_{2}(\mathbf{u}_{2}) = \left\lfloor{\Delta\frac{\mathbf{u}_{1} + \mathbf{u}_{2} - 2z_{\min}\mathbf{1}_N}{z_{\max} - z_{\min}}} \right\rceil .
\end{equation}

Then, the quantized $\Gamma_{2}(\mathbf{B})$ performs the multiplication operation with \eqref{oit2}, we have
\begin{align}\label{toy6}
&\Gamma_2(\mathbf{B})(\Gamma_2(\mathbf{u}_{1})+\Gamma_2(\mathbf{u}_{2})) = \nonumber \\
&\left\lfloor \Delta^{2}
\frac{\mathbf{B}(\mathbf{u}_{1} + \mathbf{u}_{2}) - (2\mathbf{B}\mathbf{1}_{N} + \mathbf{u}_{1} + \mathbf{u}_{2})z_{\min} + 2z_{\min}^{2}\mathbf{1}_{N}}{(z_{\max} - z_{\min} )^{2}} \right\rceil.
\end{align}

Then, the quantized $\Gamma_{1}(\mathbf{u}_{3})$ performs the addition operation with \eqref{toy6}, resulting in \eqref{yuyt2}. Furthermore, the result in \eqref{loit3} can be obtained by \eqref{yuyt2}.
\end{proof}

The obtained results in Theorem \ref{lot} reveal that the quantized positive integers do not affect the homomorphic operation of Paillier encryption and the exact value of the original $\mathbf{u}_{3} + \mathbf{B}(\mathbf{u}_{1} + \mathbf{u}_{2})$ can be separated from the computational result in \eqref{yuyt2}.
Especially, the proposed quantization method can be used in conjunction with the Paillier homomorphic encryption scheme,
which can be applied in iterative computations under privacy protection.
According to Theorem \ref{lot}, quantizing \eqref{ldofi2} results in
\begin{align}
\label{quantizing_x}
\Gamma(\mathbf{x}_k^{(t)})
&= \Gamma_1\left(\mathbf{B}_k (\mathbf{A}_k^T \mathbf{y})\right) \nonumber \\
&\quad + \Gamma_2(\bar{\mathbf{B}}_k) \left( \Gamma_2(\mathbf{z}_k^{(t-1)}) + \Gamma_2(-\mathbf{v}_k^{(t-1)}) \right) \nonumber \\
&= \left\lfloor \Delta^2 \frac{\mathbf{B}_k (\mathbf{A}_k^T \mathbf{y}) - z_{\min} \mathbf{1}_N}{(z_{\max} - z_{\min})^2} \right. \nonumber \\
&\quad \left. + \Delta^2 \frac{(\bar{\mathbf{B}}_k - z_{\min} \mathbf{1}_N)(\mathbf{z}_k^{(t-1)} - \mathbf{v}_k^{(t-1)} - 2 z_{\min} \mathbf{1}_N)}{(z_{\max} - z_{\min})^2} \right\rceil.
\end{align}

\subsection{Decryption and Inverse Quantization}

In each iteration round, the edge node simultaneously sends $\hat{\mathbf{x}}_k^{(t)}$ to the master node, which can be decrypted as
\begin{equation}
   \Gamma(\mathbf{\mathbf{x}_k^{(t)}}) = \mathrm{L}(({{\hat{\mathbf{x}}}_{k}^{(t)}})^{\epsilon}\:\mathrm{mod}~{n}^{2})\mu~\mathrm{mod}~n.
\end{equation}

According to \eqref{loit3}, the inverse quantization of $\Gamma(\mathbf{\mathbf{x}_k^{(t)}})$ is given by
\begin{align}
  &\mathbf{x}_k^{(t)}\approx {\frac{\Gamma(\mathbf{x}_k^{(t)})}{{\Delta}^{2}}}(z_{\max} - z_{\min})^{2} \nonumber\\
  & +[2\bar{\mathbf{B}}_{k}\mathbf{1}_N+(\mathbf{z}_{k}^{(t-1)}\mathbf{-v}_{k}^{(t-1)})+1]z_{\min}-2z_{\min}^{2}\mathbf{1}_N.
\end{align}

\begin{remark}
   The quantization loss depends on $\Delta$ when other parameters are unchanged, which reduces with the increase of $\Delta$. Besides, selecting a larger key for encryption and decryption is aimed at ensuring security. For the Paillier encryption scheme, the condition for correct decryption of homomorphic computation results is the plaintext $m \in [0, n)$ corresponding to the ciphertext. In actual implementations, we select the quantization parameter $\Delta = 10^{15}$ to ensure that the impact of precision loss from inverse quantization on the algorithm is negligible. For $\Gamma_{1}(\cdot)$ and  $\Gamma_{2}(\cdot)$, the maximum quantized values are $\lfloor \frac{\Delta^{2}}{max - min} \rceil$ and $\Delta$, respectively. In this case, the maximum quantized values are converted from decimal to binary representation, with maximum lengths of 83 and 49 bits, respectively. In \eqref{quantizing_x}, the quantization result of $\Gamma_{2}(\cdot)$ has the greatest impact on the length of the computational result, as it undergoes multiple multiplication operations and successive additions. Therefore, in our analysis of the computational result length, we focus on the maximum length of $\Gamma_{2}(\cdot)$'s quantization result, which is 49 bits. The homomorphic computation based on \eqref{quantizing_x} yields a result with a maximum length of $99 + \log_{2^{(N_{k} + 1)}}$ bits. For a 2048-bit $n$, the iterative computation will not be overflowed from $[0, n)$.
\end{remark}

With the obtained ${\mathbf{x}_{k}^{(t)}}$, the master note updates $\mathbf{z}^{(t+1)}$ and $\mathbf{v}^{(t+1)}$ in \eqref{sldo34} and \eqref{iu43}, respectively, i.e.,
\begin{equation}
   {{\mathbf{z}}^{(t+1)}}={{S}_{\frac{\lambda }{\rho }}}\left\{ {{\mathbf{v}}^{(t)}}+{\Gamma}^{(-1)}(f_{de}(\mathbf{x})^{(t)}) \right\},
\end{equation}
\begin{equation}
{{\mathbf{v}}^{(t+1)}}={{\mathbf{v}}^{(t)}}+{\Gamma}^{(-1)}(f_{de}(\mathbf{x})^{(t)})-{{\mathbf{z}}^{(t)}},
\end{equation}
where $\mathbf{x}=(\mathbf{x}_1^T,\mathbf{x}^T,\cdots, \mathbf{x}_K^T)^T$, ${\Gamma}^{(-1)}(\cdot)$ is the inverse quantization function and $f_{de}(\cdot)$ is the decryption function.

Finally, the proposed 3P-ADMM-PC2 algorithm is summarized in Algorithm \ref{algorithmic1}. However, there exist numerous computations of encryption and decryption involving ModExp of large integers, which constitute the primary computational overhead.

\begin{algorithm}[tp]
    \caption{\textbf{Proposed 3P-ADMM-PC2}}
    \label{algorithmic1}
    \begin{algorithmic}[1]
    \Require $\mathbf{A}\in \mathbf{R}^{M \times N}, \mathbf{y} \in \mathbf{R}^{M}$
    \Ensure $\mathbf{x}^{(t)}$
        \State Initializing $\mathbf{x, z, v}$.
        \State Selection of appropriate $\rho, \lambda, \Delta$.
        \State Specifying the minimum and maximum values of the quantizing.
        \State Generating $p,q$.
        \State $n \gets pq$.
        \State Generating $g, r$.
        \State $\epsilon \gets lcm(p - 1, q - 1)$.
        \State $\mu \gets (\mathrm{L}(g^{\epsilon}~\mathrm{mod}~n^{2}))^{- 1}~\mathrm{mod}~n$.
        \State Master node : Spliting $\mathbf{A}$ by column; Sending $\alpha_{k},$ minimum and maximum values, $\rho, \Delta$ to edge node $k$.
        \State Edge node $k$ : Computing $\mathbf{B}_{k}$; Sends $\mathbf{B}_{k}$ to master node.
        \State Master node : Computing $\hat{\boldsymbol{\alpha}}_{k}$; Sends $\hat{\boldsymbol{\alpha}}_{k}$ to edge node $k$.
        \Statex Edge node $k$: Quantizes $\mathbf{B}_k\rho$ as $\bar{\mathbf{B}}_k$.
        \Statex \textbf{~~~~The first iteration is the same as the calculation in the loop}
        \For{$t = 1$ \textbf{to} $iter_{max}$}
               \State Edge node $k$: Sending $\hat{\mathbf{x}}_k^{(t-1)}$ to master node.
               \Statex ~~~~Master node: Sending $\hat{\mathbf{z}}_{k}^{(t-1)}$, $-\hat{\mathbf{v}}_{k}^{(t-1)}$ to edge node $k$.
               \State Edge node $k$: Computing $\hat{\mathbf{x}}_k^{(t)} \gets \hat{\boldsymbol{\alpha}}_k \oplus \Gamma(\bar{\mathbf{B}}_k) \otimes (\hat{\mathbf{z}}_{k}^{(t-1)}\oplus(-\hat{\mathbf{v}}_k^{(t-1)}))$.
               \Statex ~~~~Master node: Decrypting and inverse quantizing $\hat{\mathbf{x}}_k^{(t)}$, and splicing $\hat{\mathbf{x}}_k^{(t)}$ to get $\mathbf{x}^{(t - 1)}$, then computing $\mathbf{z}^{(t)} \gets S_{\frac{\lambda}{\rho}}(\mathbf{v}^{(t - 1)} + \mathbf{x}^{(t - 1)})$ and $\mathbf{v}^{(t)} \gets \mathbf{v}^{(t - 1)} + \mathbf{x}^{(t - 1)} - \mathbf{z}^{(t)}$.
        \EndFor
        \State \Return($\mathbf{x}^{(t)}$)
    \end{algorithmic}
\end{algorithm}

\section{Adaptive GPU Accelerated 3P-ADMM-PC2}

In this section, we propose an adaptive GPU accelerated 3P-ADMM-PC2 for large key space, which has at least 1024-bit key length. The encryption, the decryption and the homomorphic operations of proposed 3P-ADMM-PC2 refer to ModExp in large key space. In particular, the ModExp computation task in large key space is first decomposed into multiple ModExp computation subtasks in small key spaces. Then, the computational subtasks in smaller space are distributed to multiple streaming multiprocessors (SM) of GPU, which are processed in parallel. In addition, the ModExp of each subtask in SM is translated into integer multiplication and shift operation with adaptive bit width.

To decompose the ModExp of proposed 3P-ADMM-PC2 in large key space into multiple subtask in smaller spaces, the following lemmas are considered.

\begin{lemma} When the moduli in a system of congruences are pairwise coprime, the system has a unique solution modulo the product of all moduli, and its general solution is given by \cite{pei1996chinese}
  \begin{equation}
      {x}_{CRT} = {\sum\limits_{i = 1}^{num}{{a}_{i}{e}_{i}{T}_{i}~\mathrm{mod}~{T}}}.
  \end{equation}
\label{CRT}
\end{lemma}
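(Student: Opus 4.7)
The plan is to establish this classical Chinese Remainder Theorem result in two stages—existence and uniqueness—after first confirming that the formula is well defined. Throughout, I let $T = \prod_{i=1}^{num} m_i$ and $T_i = T/m_i$, where the $m_i$ are the pairwise coprime moduli and the congruences read $x \equiv a_i \pmod{m_i}$.

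First, I would show that each $e_i$ in the formula is well defined. Because $T_i$ is the product of the $m_j$ with $j \neq i$, and the $m_j$ are pairwise coprime with $m_i$, we have $\gcd(T_i, m_i) = 1$. Hence the extended Euclidean algorithm produces a unique inverse $e_i \equiv T_i^{-1} \pmod{m_i}$, so $e_i T_i \equiv 1 \pmod{m_i}$. This guarantees that $x_{CRT} = \sum_{i=1}^{num} a_i e_i T_i \bmod T$ is well defined.

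Second, for existence I verify directly that $x_{CRT}$ satisfies the $j$-th congruence for every $j$. The key observation is that whenever $i \neq j$, the modulus $m_j$ divides $T_i$, so $a_i e_i T_i \equiv 0 \pmod{m_j}$ and that term drops out. Only the $i = j$ term survives modulo $m_j$, giving $x_{CRT} \equiv a_j e_j T_j \equiv a_j \pmod{m_j}$ by the defining property of $e_j$. Since $j$ was arbitrary, $x_{CRT}$ solves the entire system.

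Third, for uniqueness I would suppose $x$ and $x'$ both satisfy all congruences. Then $m_i \mid (x - x')$ for every $i$. Invoking the standard fact that if $a \mid c$, $b \mid c$, and $\gcd(a,b)=1$ then $ab \mid c$, and applying it inductively across the pairwise coprime $m_i$, I conclude that $T = \prod_i m_i$ divides $x - x'$, so $x \equiv x' \pmod{T}$. The main obstacle is not conceptual—the result is a textbook theorem—but rather being careful about where pairwise coprimality is invoked: it is needed both to ensure the inverses $e_i$ exist and to upgrade the divisibilities $m_i \mid (x - x')$ into the single divisibility $T \mid (x - x')$ required for uniqueness.
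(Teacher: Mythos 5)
Your proof is correct and is the standard textbook argument for the Chinese Remainder Theorem: coprimality of $T_i = T/m_i$ with $m_i$ gives the inverses $e_i$, direct substitution gives existence, and pairwise coprimality upgrades the individual divisibilities $m_i \mid (x - x')$ to $T \mid (x - x')$ for uniqueness. The paper itself offers no proof of this lemma, deferring entirely to the cited classical reference, so there is nothing to diverge from; your argument is exactly the canonical one that reference contains.
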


\begin{lemma}
If ${p}^{2}, {q}^{2}$ are mutually prime, we have \cite{bezout1779theorie}
  \begin{equation}
      ({q}^{2})^{- 1}~({\mathrm{mod}~{p}^{2}}){q}^{2} + ({p}^{2})^{- 1}~({\mathrm{mod}~{q}^{2}}){p}^{2}~\mathrm{mod}~{n}^{2} = 1.
  \end{equation}
\label{Bezout}
\end{lemma}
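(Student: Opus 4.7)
The plan is to prove Lemma 2 as a direct consequence of the Chinese Remainder Theorem (Lemma 1) applied to the two coprime moduli $p^{2}$ and $q^{2}$, noting that $n^{2}=p^{2}q^{2}$. First I would fix notation by setting $a\equiv(q^{2})^{-1}\pmod{p^{2}}$ and $b\equiv(p^{2})^{-1}\pmod{q^{2}}$, both of which exist precisely because $\gcd(p^{2},q^{2})=1$, the hypothesis of the lemma. The quantity to be evaluated is then $S=aq^{2}+bp^{2}\pmod{n^{2}}$.

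Next I would reduce $S$ modulo each of the two coprime factors. Reducing modulo $p^{2}$, the term $bp^{2}$ vanishes and the term $aq^{2}$ becomes $a\cdot q^{2}\equiv 1\pmod{p^{2}}$ by the definition of $a$. Symmetrically, reducing modulo $q^{2}$ the term $aq^{2}$ vanishes and $bp^{2}\equiv 1\pmod{q^{2}}$ by definition of $b$. Hence $S\equiv 1\pmod{p^{2}}$ and $S\equiv 1\pmod{q^{2}}$ simultaneously.

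To finish, I would invoke Lemma 1 (CRT) with the two pairwise coprime moduli $p^{2}$ and $q^{2}$: the unique residue modulo $p^{2}q^{2}=n^{2}$ that reduces to $1$ modulo both $p^{2}$ and $q^{2}$ is $1$ itself, because $1$ trivially satisfies both congruences and CRT guarantees uniqueness of the solution modulo the product. Therefore $S\equiv 1\pmod{n^{2}}$, which is exactly the claimed identity.

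I do not expect any serious obstacle here; the statement is essentially a packaged CRT-based Bézout identity, and the only subtle point to watch is the bookkeeping that makes the equation a congruence modulo $n^{2}$ (rather than an exact equality over $\mathbb{Z}$), since the representatives $a,b$ are only defined modulo $p^{2}$ and $q^{2}$ respectively. This is handled automatically by the $\mathrm{mod}\;n^{2}$ on the left-hand side of the statement, so the CRT step closes the argument cleanly.
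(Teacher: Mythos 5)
Your proof is correct. Note that the paper itself gives no proof of this lemma at all: it is stated bare with a citation to B\'ezout's identity, so there is nothing in the paper to match your argument against step by step. Your route --- reducing $S = aq^{2}+bp^{2}$ modulo $p^{2}$ and modulo $q^{2}$ separately and then invoking the uniqueness part of the CRT (the paper's Lemma~1) to conclude $S\equiv 1 \pmod{n^{2}}$ --- is clean and, if anything, tighter than the implied B\'ezout route. The classical B\'ezout identity gives specific integers $s,t$ with $sp^{2}+tq^{2}=1$ exactly over $\mathbb{Z}$, but the lemma as stated uses the canonical residues $(q^{2})^{-1}\bmod p^{2}$ and $(p^{2})^{-1}\bmod q^{2}$, which need not be the B\'ezout coefficients themselves; one must then observe that the sum is only guaranteed to equal $1$ modulo $n^{2}$, not as an integer. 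You flag exactly this subtlety at the end, and your CRT argument disposes of it correctly, so your write-up is both a valid proof and a useful clarification of why the \,$\mathrm{mod}\;n^{2}$\, in the statement is essential.
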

By leveraging the Lemma \ref{CRT}, the computations in the encryption process (EP) performed in the large space ${n}^{2}$, are decomposed into two smaller spaces ${p}^{2}, {q}^{2}$. Consequently, the calculations for \eqref{itq1} are optimized as follows:

\noindent
$\mathbb{Z}_{{n}^{2}} \to \mathbb{Z}_{{p}^{2} \times {q}^{2}}:$
\begin{subequations}
\begin{align}
  {g}^{'} &= g~\mathrm{mod}~{p}^{2},\\
  {g}^{''} &= g~\mathrm{mod}~{q}^{2},\\
  \Gamma_{1}^{'}(\mathbf{B}_k(\mathbf{A}_k^T\mathbf{y}))&=\Gamma_{1}(\mathbf{B}_k(\mathbf{A}_k^T\mathbf{y}))~\mathrm{mod}~\varphi({p}^{2}),\\
  \Gamma_{1}^{''}(\mathbf{B}_k(\mathbf{A}_k^T\mathbf{y}))&=\Gamma_{1}(\mathbf{B}_k(\mathbf{A}_k^T\mathbf{y}))~\mathrm{mod}~\varphi({q}^{2}),\\
  \Gamma_{2}^{'}(\mathbf{z}_{k}^{(t)})&=\Gamma_{2}(\mathbf{z}_{k}^{(t)})~\mathrm{mod}~\varphi({p}^{2}),\\
  \Gamma_{2}^{''}(\mathbf{z}_{k}^{(t)})&=\Gamma_{2}(\mathbf{z}_{k}^{(t)})~\mathrm{mod}~\varphi({q}^{2}),\\
  \Gamma_{2}^{'}(\mathbf{-v}_{k}^{(t)})&=\Gamma_{2}(\mathbf{-v}_{k}^{(t)})~\mathrm{mod}~\varphi({p}^{2}),\\
  \Gamma_{2}^{''}(\mathbf{-v}_{k}^{(t)})&=\Gamma_{2}(\mathbf{-v}_{k}^{(t)})~\mathrm{mod}~\varphi({p}^{2}),
\end{align}
\end{subequations}
where $\varphi(\cdot)$ is the Euler's totient function. The corresponding calculation results are given by
\begin{subequations}
\begin{align}
  {({g}^{\Gamma_{1}(\mathbf{B}_k(\mathbf{A}_k^T\mathbf{y}))})}^{'}&={{g}^{'}}^{\Gamma_{1}^{'}(\mathbf{B}_k(\mathbf{A}_k^T\mathbf{y}))}~\mathrm{mod}~{p}^{2},\\
  {({g}^{\Gamma_{1}(\mathbf{B}_k(\mathbf{A}_k^T\mathbf{y}))})}^{''}&={{g}^{''}}^{\Gamma_{1}^{''}(\mathbf{B}_k(\mathbf{A}_k^T\mathbf{y}))}~\mathrm{mod}~{q}^{2},\\
   {({g}^{\Gamma_{2}(\mathbf{z}_{k}^{(t)})})}^{'}&={{g}^{'}}^{\Gamma_{2}^{'}(\mathbf{z}_{k}^{(t)})}~\mathrm{mod}~{p}^{2},\\
   {({g}^{\Gamma_{2}(\mathbf{z}_{k}^{(t)})})}^{''}&={{g}^{''}}^{\Gamma_{2}^{''}(\mathbf{z}_{k}^{(t)})}~\mathrm{mod}~{q}^{2},\\
   {({g}^{\Gamma_{2}(\mathbf{-v}_{k}^{(t)})})}^{'}&={{g}^{'}}^{\Gamma_{2}^{'}(\mathbf{-v}_{k}^{(t)})}~\mathrm{mod}~{p}^{2},\\
   {({g}^{\Gamma_{2}(\mathbf{-v}_{k}^{(t)})})}^{''}&={{g}^{''}}^{\Gamma_{2}^{''}(\mathbf{-v}_{k}^{(t)})}~\mathrm{mod}~{q}^{2}.
\end{align}
\end{subequations}
According to Lemma \ref{CRT}, we have
\begin{equation}\label{crt1}
\begin{aligned}
     g^{\Gamma_{1}(\mathbf{B}_k(\mathbf{A}_k^T\mathbf{y}))}~&\mathrm{mod}~{n}^{2}={({g}^{\Gamma_{1}(\mathbf{B}_k(\mathbf{A}_k^T\mathbf{y}))})}^{'}{({q}^{2})}^{-1}(~\mathrm{mod}~{p}^{2}){q}^{2}\\
     &+{({g}^{\Gamma_{1}(\mathbf{B}_k(\mathbf{A}_k^T\mathbf{y}))})}^{''}{({p}^{2})}^{-1}(~\mathrm{mod}~{q}^{2}){p}^{2}.
\end{aligned}
\end{equation}
Using Lemma \ref{Bezout}, \eqref{crt1} is further expressed as
\begin{equation}
\begin{aligned}
     g^{\Gamma_{1}(\mathbf{B}_k(\mathbf{A}_k^T\mathbf{y}))}~&\mathrm{mod}~{n}^{2}={({g}^{\Gamma_{1}(\mathbf{B}_k(\mathbf{A}_k^T\mathbf{y}))})}^{'}+[{({g}^{\Gamma_{1}(\mathbf{B}_k(\mathbf{A}_k^T\mathbf{y}))})}^{''}\\
     &-{({g}^{\Gamma_{1}(\mathbf{B}_k(\mathbf{A}_k^T\mathbf{y}))})}^{'}]{({p}^{2})}^{-1}(\mathrm{mod}~{q}^{2}){p}^{2}.
\end{aligned}
\label{gA-C-B}
\end{equation}
Similarly, \eqref{loi32} and \eqref{loi321} can be further calculated by
\begin{equation}
\begin{aligned}
     ({g}^{\Gamma_{2}(\mathbf{z}_{k}^{(t)})})~&\mathrm{mod}~{n}^{2}={({g}^{\Gamma_{2}(\mathbf{z}_{k}^{(t)})})}^{'}+[{({g}^{\Gamma_{2}(\mathbf{z}_{k}^{(t)})})}^{''}\\
     &-{({g}^{\Gamma_{2}(\mathbf{z}_{k}^{(t)})})}^{'}]{({p}^{2})}^{-1}(\mathrm{mod}~{q}^{2}){p}^{2},
\end{aligned}
\label{gz-C-B}
\end{equation}
\begin{equation}
\begin{aligned}
     ({g}^{\Gamma_{2}(\mathbf{-v}_{k}^{(t)})})~&\mathrm{mod}~{n}^{2}={({g}^{\Gamma_{2}(\mathbf{-v}_{k}^{(t)})})}^{'}+[{({g}^{\Gamma_{2}(\mathbf{-v}_{k}^{(t)})})}^{''}\\
     &-{({g}^{\Gamma_{2}(\mathbf{-v}_{k}^{(t)})})}^{'}]{({p}^{2})}^{-1}(\mathrm{mod}~{q}^{2}){p}^{2}.
\end{aligned}
\label{gv-C-B}
\end{equation}

Furthermore, the EP calculations for $\mu$ and ${\mathbf{c}}^{\epsilon}~\mathrm{mod}~{n}^{2}$ in the decryption phase can also be accelerated similar to that in encryption phase.

After optimizing the ModExp in the encryption and decryption processes, the computation overhead is reduced, and computations on the two smaller key spaces $\mathbb{Z}_{{p}^{2}}, \mathbb{Z}_{{q}^{2}}$, can be performed in parallel. However, the EP in large key space $\mathbb{Z}_{{n}^{2}}$ is decomposed into the smaller key spaces $\mathbb{Z}_{{p}^{2}}, \mathbb{Z}_{{q}^{2}}$, practical implementation remains challenging due to the limited computational power of edge nodes. For example, the key space $\mathbb{Z}_{{n}^{2}}^{4096bits}$ is merely decomposed into smaller spaces $\mathbb{Z}_{{p}^{2}}^{2048bits}, \mathbb{Z}_{{q}^{2}}^{2048bits}$ for a 2048-bit length key, which are still difficult to handle. In addition, the EP computations for a plaintext vector encrypting or decrypting always are processed one by one, increasing the computational overhead.
\vspace{-10pt}

\subsection{Parallel Computing in Streaming Multiprocessor}

 Due to the limited number of CPU cores, ModExp operations in large key spaces exhibit low parallelism on CPU. To implement in the hardware devices with high parallelism, such as GPU or FPGA, ModExp operations should be optimized due to  each GPU core supports only 32 bits or 64 bits operations. Note that the EP operation mainly includes multiplication of large integers and modular operations. Thus we present the GPU-accelerated deployment process of the proposed 3P-ADMM-PC2 scheme as follows.

\subsubsection{Multiplication of Large Integers}
The decimal large integers can be expressed by the corresponding polynomial coefficient vectors. Given a large decimal integer
\begin{equation}
      {D}_{(10)} = {\sum\limits_{w = 0}^{L -1}}{{d}_{w} \cdot {\tilde{b}}^{w}}.
\end{equation}
Using coefficient representation
\begin{equation}
      {\mathbf{D}}_{(\tilde{b})} = [{d}_{L -1}, \cdots, {d}_{0}].
\end{equation}
where $\tilde{b}$ denotes an arbitrary base. When $\tilde{b} = 2$, ${\mathbf{D}}_{(\tilde{b})}$ is the well-known binary bit string. Under coefficient representation, the multiplication of two large integers ${\mathbf{D}}_{1, (\tilde{b})}$ and ${\mathbf{D}}_{2, (\tilde{b})}$
\begin{equation}
\begin{aligned}\label{big_int_mul_1}
    [{\mathbf{D}}_{1} {\mathbf{D}}_{2}]_{(\tilde{b})} = [{\sum\limits_{w_{1} + w_{2} = 2L - 1}} {d_{{w}_{1}}d_{{w}_{2}}}, \cdots, {\sum\limits_{w_{1} + w_{2} = 0}} {d_{{w}_{1}}d_{{w}_{2}}}].
\end{aligned}
\end{equation}
Different $\tilde{b}$ can be chosen to allow multiple GPU cores to compute the product of each bit in parallel, and the results are stored in shared memory. Multiple computational tasks can even be assigned to a single GPU core, which stores the results in shared memory after computation to reduce access frequency and thus decrease computation time. Note that the multiplication of $[{\mathbf{D}}_{1}{\mathbf{D}}_{2}]_{(\tilde{b})}$ is equivalent to the multiplication of two polynomials. To further reduce the complexity, ${\mathbf{D}}_{ (\tilde{b})}$ is transformed into point-value representations using FFT, and performs dot product. The obtained results are subsequently converted back to coefficient representation using IFFT, i.e.,
\begin{equation}
    {\mathbf{D}}_{(\tilde{b}), FFT} = FFT({\mathbf{D}}_{(\tilde{b})}) = [{d}^{FFT}_{L - 1}, \cdots, {d}^{FFT}_{0}].
\label{FFT}
\end{equation}
Then the product of two numbers can be written as
\begin{equation}
\label{big_int_mul_2}
    [{\mathbf{D}}_{1}{\mathbf{D}}_{2}]_{(\tilde{b}), FFT} = [{d}^{FFT}_{1, L - 1}{d}^{FFT}_{2, L - 1}, \cdots, {d}^{FFT}_{1, 0}{d}^{FFT}_{2, 0}].
\end{equation}
Then, IFFT is applied to convert to coefficient representation
\begin{align}\label{IFFT}
    [{D}_{1}{D}_{2}]_{(10)} \Leftrightarrow IFFT([{\mathbf{D}}_{1}{\mathbf{D}}_{2}]_{(\tilde{b}), FFT}),
\end{align}

\begin{algorithm}[tp]
    \caption{\textbf{Proposed GPU-accelerated EP Computation}}
    \label{algorithmic2}
    \begin{algorithmic}[1]
    \Require $g, {\mathbf{m}}_{{(\tilde{b})}_{2}}, {\mathbf{n}}^{2}_{(\tilde{b})}, {n}^{2}.$
    \Ensure ${g}^{m}~\mathrm{mod}~{n}^{2}.$
        \State ${\mathbf{T}}_{(\tilde{b})} \gets 1$
        \State $length \gets len({\mathbf{n}}^{2}_{(\tilde{b})})$
        \State $R \gets \lfloor \frac{1 << 2length}{{n}^{2}} \rfloor$
        \State $\mathbf{R}_{(\tilde{b}), FFT} \gets FFT(\mathbf{R}_{(\tilde{b})})$
        \State ${\mathbf{n}}^{2}_{(\tilde{b}), FFT} \gets FFT({\mathbf{n}}^{2}_{(\tilde{b})})$

        \For{$i = 0$ \textbf{to} $len({\mathbf{m}}_{{(\tilde{b})}_{2}})$}
               \If{${\mathbf{m}}_{{(\tilde{b})}_{2}, i} == 1$}
                     \State ${\mathbf{T}}_{(\tilde{b}), FFT} \gets FFT({\mathbf{T}}_{(\tilde{b})})$
                     \State ${\mathbf{g}}_{(\tilde{b}), FFT} \gets FFT({\mathbf{g}}_{(\tilde{b})})$
                     \State ${\mathbf{a}}_{(\tilde{b}), FFT} \gets {\mathbf{T}}_{(\tilde{b}), FFT} \times {\mathbf{g}}_{(\tilde{b}), FFT}$
                     \State $\mathbf{Q}_{(\tilde{b})} \gets IFFT({\mathbf{a}}_{(\tilde{b}), FFT} \times \mathbf{R}_{(\tilde{b}), FFT}) >> 2length$
                     \State ${\mathbf{T}}_{(\tilde{b})} \gets IFFT({\mathbf{a}}_{(\tilde{b}), FFT} - FFT(\mathbf{Q}_{(\tilde{b})}) \times {\mathbf{n}}^{2}_{(\tilde{b}), FFT})$
                     \If{${\mathbf{T}}_{(\tilde{b})} >= {\mathbf{n}}^{2}_{(\tilde{b})}$}
                           \State ${\mathbf{T}}_{(\tilde{b})} \gets {\mathbf{T}}_{(\tilde{b})} - {\mathbf{n}}^{2}_{(\tilde{b})}$
                     \EndIf
               \EndIf
               \State ${\mathbf{g}}_{(\tilde{b}), FFT} \gets FFT({\mathbf{g}}_{(\tilde{b})})$
               \State ${\mathbf{b}}_{(\tilde{b}), FFT} \gets {\mathbf{g}}_{(\tilde{b}), FFT} \times {\mathbf{g}}_{(\tilde{b}), FFT}$
               \State $\mathbf{P}_{(\tilde{b})} \gets IFFT({\mathbf{b}}_{(\tilde{b}), FFT} \times \mathbf{R}_{(\tilde{b}), FFT}) >> 2length$
               \State ${\mathbf{g}}_{(\tilde{b})} \gets IFFT({\mathbf{b}}_{(\tilde{b}), FFT} - FFT(\mathbf{P}_{(\tilde{b})}) \times {\mathbf{n}}^{2}_{(\tilde{b}), FFT})$
               \If{${\mathbf{g}}_{(\tilde{b})} >= {\mathbf{n}}^{2}_{(\tilde{b})}$}
                     \State ${\mathbf{g}}_{(\tilde{b})} \gets {\mathbf{g}}_{(\tilde{b})} - {\mathbf{n}}^{2}_{(\tilde{b})}$
               \EndIf
        \EndFor
        \State \Return(${\mathbf{T}}_{(\tilde{b})} \to {T}_{10}$) \Comment Decimal conversion
    \end{algorithmic}
\end{algorithm}

In the computation of FFT and IFFT, $\xi$ serves as a primitive root on the cyclotomic polynomial, where $\xi = e^{-\frac{2 \pi i}{L}}$, $i$ denotes the imaginary unit, and $L$ must be a power of 2.

\begin{remark}
To ensure that $L$ is a power of 2, the $\tilde{b}$ can be chosen as 2, 4, 16, 256, 65536. Then, a 4096-bit large integer will be converted into vectors of lengths 4096, 2048, 1024, 512, 256, respectively. Furthermore, the dimensions of the Discrete Fourier Transform (DFT) matrix used in the subsequent FFT process are determined. And the inversion operation of DFT matrix for IFFT is performed only once.
\end{remark}
\vspace{-10pt}

\subsection{ModExp Computing Task Decomposition}
\begin{figure*}[htbp]
\centering
\includegraphics[width=7in]{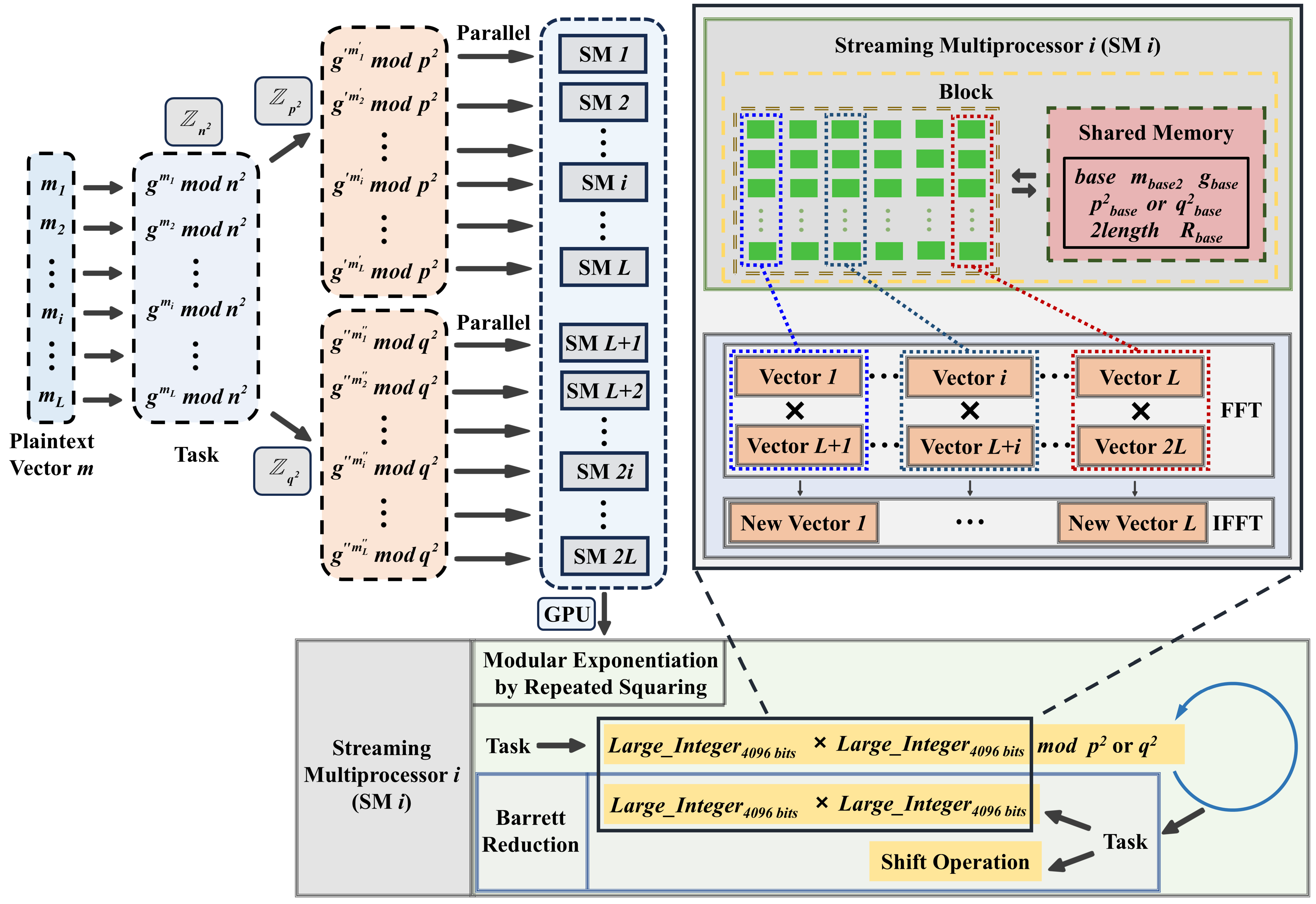}
\captionsetup{format=plain,justification=centering}
\caption{GPU-accelerated ModExp computation in large key space.}
\label{GPU-acce}
\end{figure*}

\subsubsection{Modular Operation}
To deploy ModExp in GPU, it is necessary to address the issue of overflow beyond the bit width of CUDA cores. Barrett Reduction replaces division with multiplication and shift operations. However, Barrett Reduction also involves large integer multiplication in its intermediate steps, which similarly overflowing the CUDA cores bit width. Based on aforementioned large integer multiplication, the large integer ModExp is optimized and decomposed. And our proposed GPU-accelerated EP computation is summarized in Algorithm \ref{algorithmic2}. The detailed computational architecture is illustrated in Fig. \ref{GPU-acce}.
\vspace{-10pt}

\subsection{Proposed GPU-accelerated 3P-ADMM-PC2}

Although GPU acceleration can be employed to expedite the encryption and decryption of plaintext vectors, the master node bears a substantial computational burden. Consequently, we optimize the compute burden of distributed nodes by sacrificing some communication overhead. A GPU-accelerated 3P-ADMM-PC2 algorithm is proposed, which involves leveraging edge nodes to collaboratively encrypt and decrypt data with the master node. During each iteration, each edge node communicates with the master node three times, exchanging different parameters, achieving a computation overhead balance for collaborative encryption and decryption, as shown in Fig. \ref{communication-three}.
\begin{figure}[tp]
\centering
\includegraphics[width=3.5in]{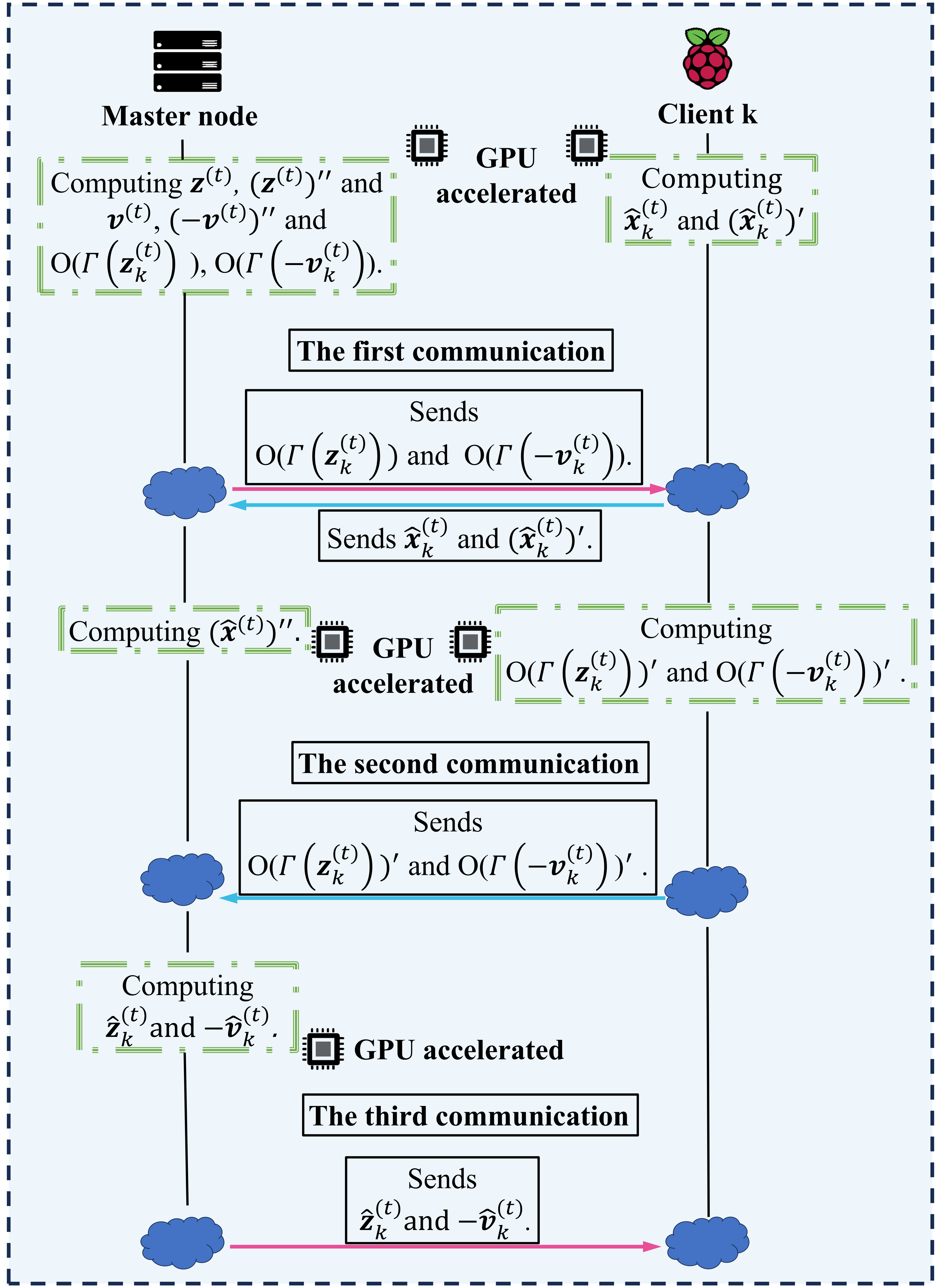}
\captionsetup{format=plain,justification=centering}
\caption{The illustration of proposed GPU-accelerated 3P-ADMM-PC2.}
\label{communication-three}
\end{figure}

Note that the calculation on large space $\mathbb{Z}_{{n}^{2}}$ can be decomposed into two smaller spaces $\mathbb{Z}_{{p}^{2}}, \mathbb{Z}_{{q}^{2}}$, and the results from these smaller spaces can be linearly combined back to the large space through Lemma \ref{CRT} and Lemma \ref{Bezout}, enabling parallel computations across two smaller spaces. Consequently, in our proposed GPU-accelerated 3P-ADMM-PC2 algorithm, the computations on $\mathbb{Z}_{{q}^{2}}$ are executed by the master node, while those on $\mathbb{Z}_{{p}^{2}}$ are handled by the edge nodes.
For one complete iteration of computation, the master node communicates with edge node $k$ three times. In the initialization parameter phase, the master node sends ${p}^{2}$ and $\varphi({p}^{2} )$ to edge node $k$ for subsequent computations.
After $\hat{\mathbf{x}}_{k}^{(t)}$ and $\mathbf{z}^{(t)}$, $\mathbf{v}^{(t)}$ are computed by the edge node $k$ and the master node,
the edge node $k$ and the master node then compute $(\hat{\mathbf{x}}_k^{(t)})^{'} = \hat{\mathbf{x}}_{k}^{(t)}~\mathrm{mod}~{{p}^{2}}$ and $(\mathbf{z}^{(t)})^{''} = \Gamma(\mathbf{z}^{(t)})~\mathrm{mod}~\varphi({q}^{2}), (\mathbf{-v}^{(t)})^{''} = \Gamma(\mathbf{-v}^{(t)})~\mathrm{mod}~\varphi({q}^{2})$. Three communication rounds of our proposed GPU-accelerated 3P-ADMM-PC2 algorithm can be presented as follows:

\subsubsection{The First Communication Round}

The master node sends $\mathbf{O}(\Gamma(\mathbf{z}_{k}^{(t)})), \mathbf{O}(\Gamma(\mathbf{-v}_{k}^{(t)}))$ to edge node $k$, where $\mathbf{O}(\cdot)$ is the obfuscation function. To ensure the privacy of the data,
master node allows the edge nodes $k$ to collaborate in the encryption, the data needs to be obfuscated and sent. At the same time, the edge node $k$ sends $\hat{\mathbf{x}}_{k}^{(t)}, (\hat{\mathbf{x}}_k^{(t)})^{'}$ to the master node.

After the first communication round, master node computes $(\hat{\mathbf{x}}_k^{(t)})^{''} = \hat{\mathbf{x}}_{k}^{(t)}~\mathrm{mod}~{{q}^{2}}$, and then computes $\mathbf{x}_{k}^{(t)}$ according to Lemma 1.
Edge node $k$ computes $(\mathbf{O}(\Gamma(\mathbf{z}_{k}^{(t)})))^{'} = \mathbf{O}(\Gamma(\mathbf{z}_{k}^{(t)}))~\mathrm{mod}~\varphi({p}^{2}), (\mathbf{O}(\Gamma(\mathbf{-v}_{k}^{(t)})))^{'} = \mathbf{O}(\Gamma(\mathbf{-v}_{k}^{(t)}))~\mathrm{mod}~\varphi({p}^{2})$.

\subsubsection{The Second Communication Round}

Edge node $k$ sends $(\mathbf{O}(\Gamma(\mathbf{z}_{k}^{(t)})))^{'}, (\mathbf{O}(\Gamma(\mathbf{-v}_{k}^{(t)})))^{'}$ to master node.

After the second communication round, master node calculates $\hat{\mathbf{z}}_{k}, -\hat{\mathbf{v}}_{k}$. according to (\ref{gz-C-B}) and (\ref{gv-C-B}).

\subsubsection{The Third Communication Round}

The master node sends $\hat{\mathbf{z}}_{k}, -\hat{\mathbf{v}}_{k}$ to edge node $k$.

Our proposed GPU-accelerated 3P-ADMM-PC2 algorithm is summarized in Algorithm \ref{algorithmic3}. The large integer ModExp of Algorithm \ref{algorithmic3} is performed according to Algorithm \ref{algorithmic2}.
\begin{algorithm}[tp]
    \caption{\textbf{Proposed GPU-accelerated 3P-ADMM-PC2}}
    \label{algorithmic3}
    \begin{algorithmic}[1]
    \Require $\mathbf{A}\in \mathbf{R}^{M \times N}, \mathbf{y} \in \mathbf{R}^{M}$
    \Ensure $\mathbf{x}^{(t)}$
        \State Initializing $\mathbf{x, z, v}$.
        \State Selection of appropriate $\rho, \lambda, \Delta$.
        \State Specifying the minimum and maximum values of the quantizing.
        \State Generating $p,q$.
        \State $n \gets pq$.
        \State Generating $g$.
        \State $\epsilon \gets lcm(p - 1, q - 1)$.
        \State $\mu \gets (\mathrm{L}(g^{\epsilon}~\mathrm{mod}~n^{2}))^{- 1}~\mathrm{mod}~n$.
        \State Master node: Spliting $\mathbf{A}$ by column; Sending $\alpha_{k},$ minimum and maximum values, $\rho, \Delta$ to edge node $k$.
        \State Edge node $k$: Computing $\mathbf{B}_{k}$; Sending $\mathbf{B}_{k}$ to master node.
        \State Master node: Computing $\hat{\boldsymbol{\alpha}}_{k}$; Sends $\hat{\boldsymbol{\alpha}}_{k}$ to edge node $k$.
        \Statex Edge node $k$: Quantizing $\mathbf{B}_k\rho$ as $\bar{\mathbf{B}}_k$.
        \For{$t = 1$ \textbf{to} $iter_{max}$}
               \State Edge node $k$: Sending $\hat{\mathbf{x}}_k^{(t-1)}, (\hat{\mathbf{x}}_k^{(t-1)})^{'}$ to master node.
               \Statex ~~~~Master node: Sending $\mathbf{O}(\Gamma(\mathbf{z}^{(t)})), \mathbf{O}(\Gamma(-\mathbf{v}^{(t)}))$ to edge node $k$.
               \State Edge node $k$: Computing $(\mathbf{O}(\Gamma(\mathbf{z}^{(t)})))^{'}$ and $(\mathbf{O}(\Gamma(-\mathbf{v}^{(t)})))^{'}$.
               \Statex ~~~~Master node: Computing $(\hat{\mathbf{x}}_k^{(t-1)})^{''}$ and $\mathbf{x}^{(t - 1)}$.
               \State Edge node $k$: Sending $(\mathbf{O}(\Gamma(\mathbf{z}^{(t)})))^{'}$ and $\mathbf{x}^{(t-1)}$ to master node.
               \State Master node: Computing $\hat{\mathbf{z}}_{k}, -\hat{\mathbf{v}}_{k}$.
               \State Master node: Sending $\hat{\mathbf{z}}_{k}, -\hat{\mathbf{v}}_{k}$ to edge node $k$.
               \State Edge node $k$: Computing $\hat{\mathbf{x}}_k^{(t-1)}, (\hat{\mathbf{x}}_k^{(t-1)})^{'}$.
               \Statex ~~~~Master node: Computing $\mathbf{z}^{(t)}, (\mathbf{z}^{(t)})^{''}$, $\mathbf{v}^{(t)}, (-\mathbf{v}^{(t)})^{''}$.
        \EndFor
        \State \Return($\mathbf{x}^{(t)}$)
    \end{algorithmic}
\end{algorithm}
\begin{remark}
   The master node has only sent ${p}^{2}$ to the edge nodes in our proposed GPU-accelerated 3P-ADMM-PC2 algorithm, and the edge nodes only know the ${p}^{2}$, $\varphi({p}^{2})$ without other parameters of the Paillier encryption and decryption. Thus the privacy of the data can still be guaranteed.
\end{remark}
\vspace{-10pt}

\subsection{Complexity and Convergence Analysis}
The complexity of Algorithm \ref{algorithmic3} is primarily determined by each large integer ModExp operation, making its essence the complexity of Algorithm \ref{algorithmic2}. In Algorithm \ref{algorithmic2}, \eqref{FFT}\eqref{big_int_mul_2}\eqref{IFFT} reduces the complexity of the multiplication calculation in \eqref{big_int_mul_1} from $\mathcal{O}({L}^{2})$ to $\mathcal{O}(LlogL)$. Ultimately, the overall complexity of Algorithm \ref{algorithmic2} is optimized to $\mathcal{O}(len({\mathbf{m}}_{{(\tilde{b})}_{2}})(5LlogL+2L))$, and further approximated to $\mathcal{O}(len({\mathbf{m}}_{{(\tilde{b})}_{2}})LlogL)$.

According to the convergence theory of approximate ADMM \cite{boyd2011distributed}\cite{eckstein1992douglas}, if the errors are bounded, the algorithm converges to a neighborhood of the optimal solution. In the distributed scenario presented in the paper, matrix $\mathbf{A} \in \mathbb{R}^{M \times N}$ is partitioned by columns into $[\mathbf{A}_{1}, \mathbf{A}_{2}, ..., \mathbf{A}_{K}]$ based on the total number $K$ of edge nodes. During computation, the data is quantized into positive integers, introducing quantization noise on the order of $10^{-14}$. Additionally, the matrix partitioning neglects off-diagonal blocks, which introduces a partitioning error. In practical algorithms, matrix $\mathbf{A}$ is a Gaussian matrix. Its column vectors $\mathbf{a}_{i}$ and $\mathbf{a}_{j}, i \neq j$ are random vectors, and their inner product $\mathbf{a}^{T}_{i}\mathbf{a}_{j}$ is a random variable satisfying the following conditions:
\begin{align}
E[\mathbf{a}_{i}^{T} \mathbf{a}_{j}] &= 0, \\
\text{Var}(\mathbf{a}_{i}^{T} \mathbf{a}_{j}) &= M.
\end{align}
where $E[\cdot]$ is expected value and $Var(\cdot)$ is the variance.Therefore, the variance of the normalized inner product $\frac{\mathbf{a}_{i}^{T} \mathbf{a}_{j}}{M}$ is $\frac{1}{M}$. When $M$ is large, the correlation between columns becomes negligible, indicating approximate orthogonality. For the segmentation error, if the columns of matrix $\mathbf{A}$ exhibit low correlation, the segmentation error remains bounded. Meanwhile, the quantization noise is on the order of $10^{-14}$ and is stably bounded. Therefore, the overall error remains small and bounded, ensuring the practical convergence of the algorithm.

\section{Experimental Results And Analysis}
\begin{figure}[htbp]
\centering
\includegraphics[width=3.5in]{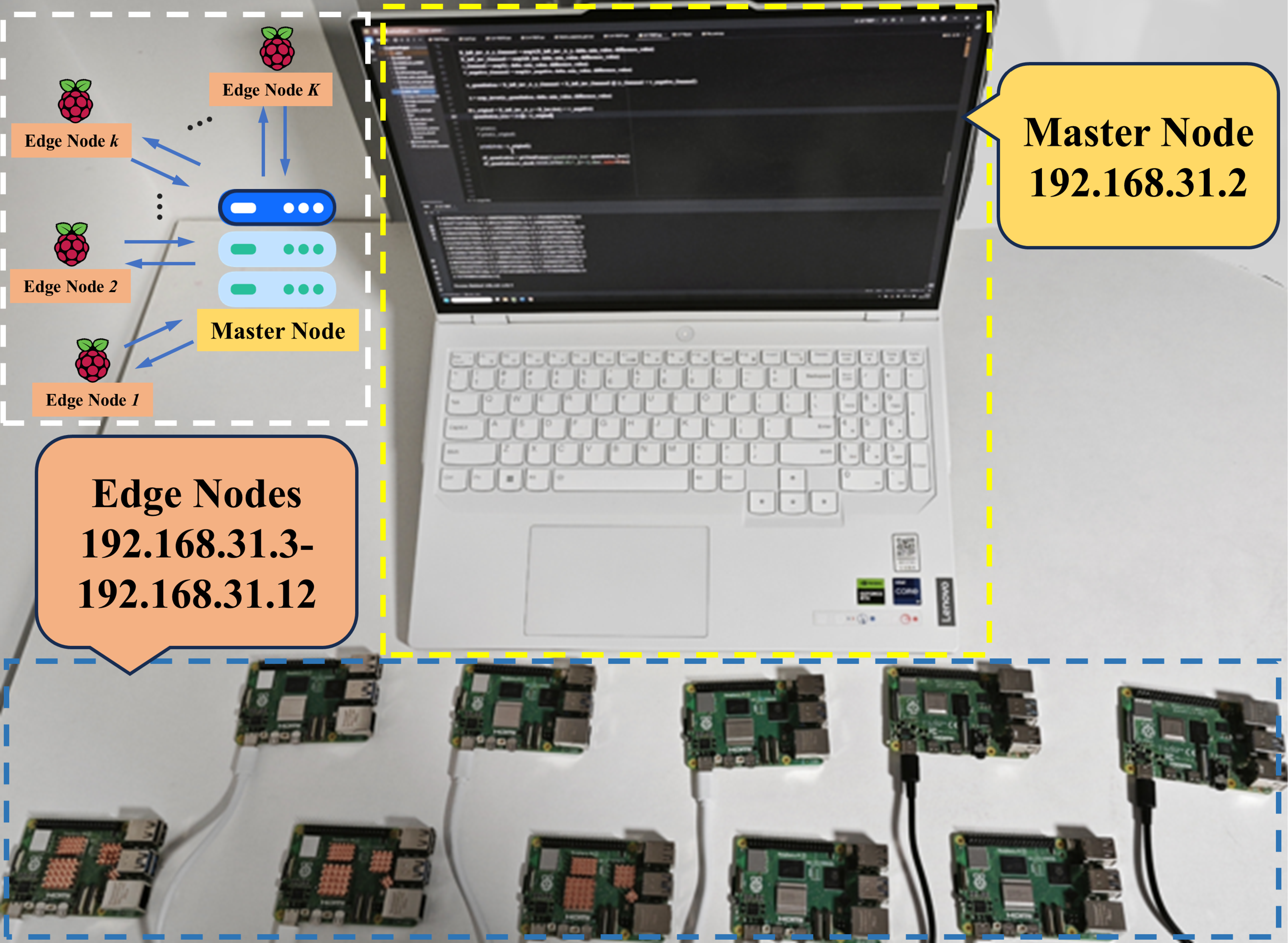}
\captionsetup{format=plain,justification=centering}
\caption{Hardware configuration and network environment for distributed computing.}
\label{hardware-environment}
\end{figure}
In this section, the proposed 3P-ADMM-PC2 scheme is first deployed in a hardware environment, and then the precision loss of the quantization as well as the mean squared error (MSE) are evaluated and compared. Furthermore, the proposed GPU-accelerated 3P-ADMM-PC2 scheme is detailed analyzed and evaluated. Finally, a typical application example for power network reconstruction is presented.

As shown in Fig. \ref{hardware-environment}, the hardware environment utilized in the experiments comprises a master node equipped with an Intel Core i9-13900HX CPU and 16GB RAM, along with an NVIDIA RTX 4060 8GB RAM GPU. The edge nodes consist of three Raspberry Pi 5 development boards, where each node is equipped with a Cortex-A76 64-bit SoC, 800MHz VideoCore VII GPU (supporting OpenGL ES 3.1 and Vulkan 1.2), and 8G LPDDR4X-4267 SDRAM.

In ADMM, $\rho$ is the penalty coefficient in the augmented Lagrangian function that controls constraint violation and penalty magnitude. $\lambda$ is the weighting coefficient of the L1 regularization term, which controls solution sparsity. Without loss of generality and to align with the data characteristics, we set both $\rho$ and $\lambda$ to 1 across all test scenarios.

\begin{figure*}[htbp]
\centering
\includegraphics[width=7in]{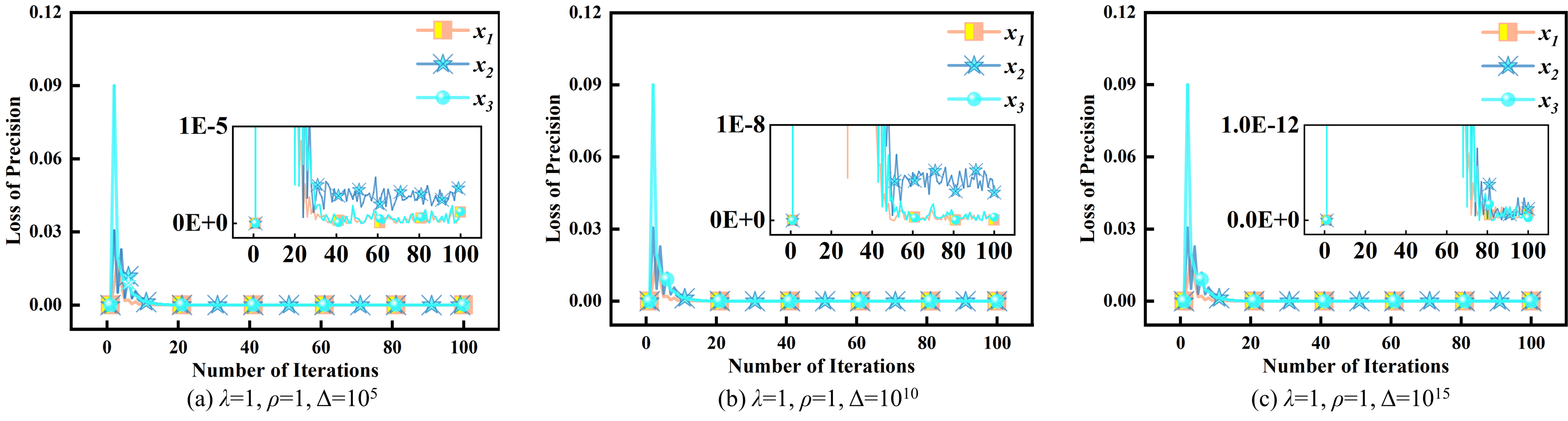}
\captionsetup{format=plain,justification=centering}
\caption{Precision loss of the quantization scheme with different $\Delta$ values.}
\label{x-loss-of-precision-image}
\end{figure*}
\begin{figure}[htbp]
\centering
\includegraphics[width=3.5in]{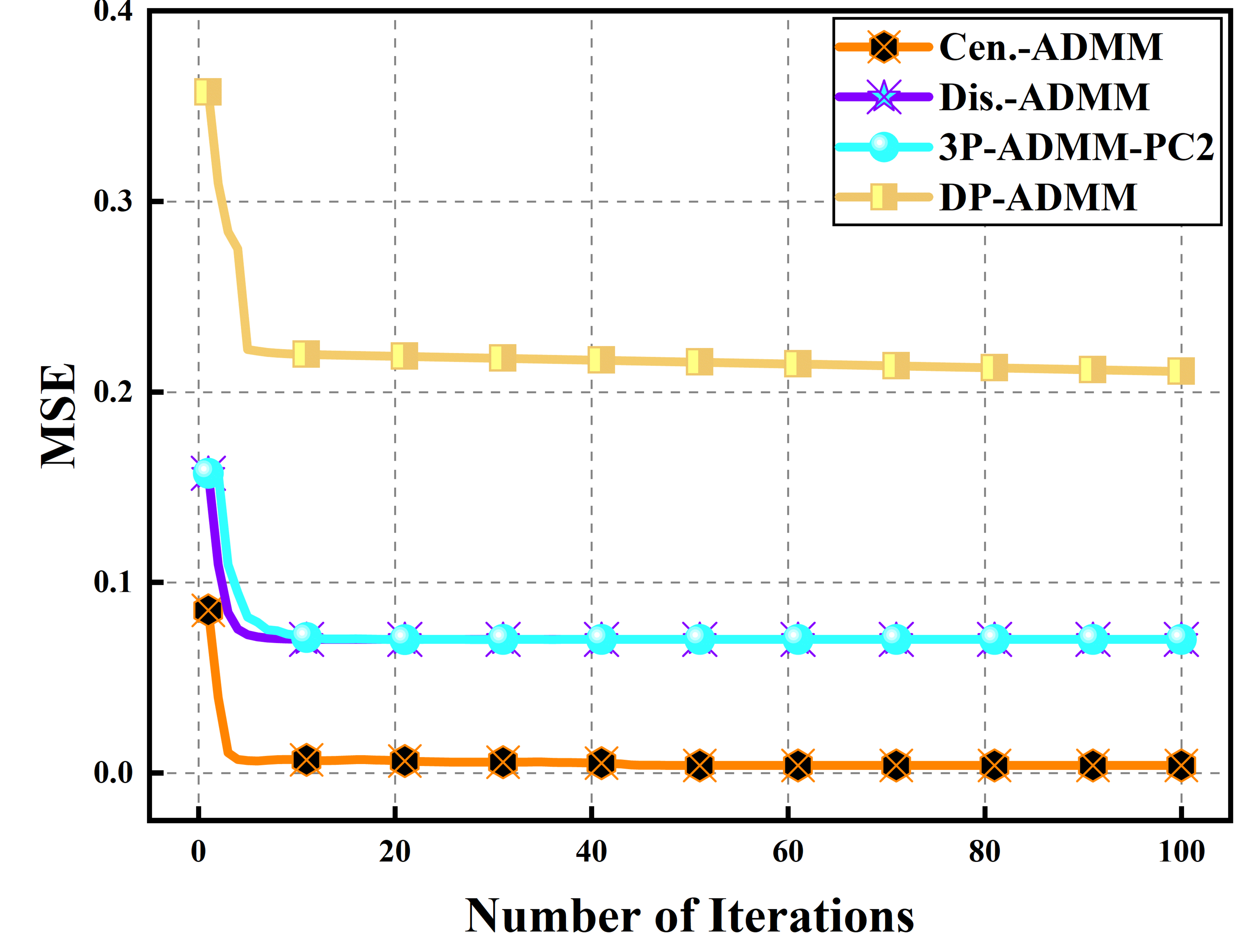}
\captionsetup{format=plain,justification=centering}
\caption{MSE of various methods.}
\label{MSE}
\end{figure}
\vspace{-10pt}

\subsection{Quantization Error and MSE of Proposed 3P-ADMM-PC2}
Fig. \ref{x-loss-of-precision-image} illustrates the quantization precision loss of the proposed 3P-ADMM-PC2. Without loss of generality, we consider $\mathbf{A} \in \mathbb{R}^{3 \times 3}, \mathbf{A} \sim \mathcal{CN}(0, 1)$. The precision loss is defined as the absolute value of the difference between the estimates of the unencrypted Dis.-ADMM and proposed 3P-ADMM-PC2 for $\mathbf{x}=(x_1,x_2,x_3)^T$. As shown in Fig. \ref{x-loss-of-precision-image}, as $\Delta$ increases from ${10}^{5}$ to ${10}^{15}$, the quantization precision loss decreases from ${10}^{-6}$ to ${10}^{-16}$. And the quantization precision loss can be approximated as $\frac{1}{10\Delta}$.

In a local area network (LAN), we use the hardware environment in Fig. \ref{hardware-environment} to deploy distributed computing algorithms. To facilitate a performance comparison with centralized ADMM, we selected the maximum dimension that centralized ADMM can handle, denoted as $\mathbf{A} \in \mathbb{R}^{3000 \times 27000}$, $\mathbf{A} \sim \mathcal{CN}(0,1)$. The computational tasks with $\mathbf{A}_{k} \in {\mathbb{R}}^{3000 \times 9000}, k = 1, 2, 3,$ are assigned to 3 edge nodes. We compare the MSE of the centralized ADMM (Cen.-ADMM), Dis.-ADMM, DP-ADMM, and proposed 3P-ADMM-PC2. In the proposed 3P-ADMM-PC2, the key length is 2048 bits with $\Delta = {10}^{15}$. The choice of $\mathbf{A}$'s dimension is based on the dimension of $\mathbf{A}^T\mathbf{A}+\rho \mathbf{I}_N$. When inverting $\mathbf{A}^T\mathbf{A}+\rho \mathbf{I}_N$ with dimension $27000 \times 27000$, approximately 14 GB of memory is required, which is the limit of available memory for Cen.-ADMM. In Fig. \ref{MSE}, the Cen.-ADMM shows the best MSE performance. The Dis.-ADMM has an MSE approximately 0.07 higher than that of Cen.-ADMM. The DP-Dis.-ADMM has an MSE approximately 0.2 higher than that of Cen.-ADMM. The MSE of the proposed 3P-ADMM-PC2 differs from that of the Dis-ADMM by approximately ${10}^{-14}$. The MSEs of both schemes almost overlap, indicating that the quantization precision loss is negligible.

To evaluate the performance of the proposed 3P-ADMM-PC2 under different numbers of edge nodes, we compared the MSE for configurations with 3 and 10 edge nodes. Simultaneously, to accommodate large-scale sparse signal processing, we increased the dimension of $\mathbf{x}^{(t)}$ to $\mathbf{x}^{(t)} \in \mathbb{R}^{65536}$. Accordingly, the dimension of matrix $\mathbf{A}$ was expanded to $\mathbf{A} \in \mathbb{R}^{10000 \times 65536}$. The computational tasks $\mathbf{A}_{k} \in \mathbb{R}^{10000 \times 21846}, k = 1, 2, 3,$ and $\mathbf{A}_{k} \in \mathbb{R}^{10000 \times 6554}, k = 1, 2, \cdots, 10,$ are distributed to 3 and 10 edge nodes, respectively. As shown in Fig. \ref{3-10-nodes-MSE}, we compared the convergence rate and MSE of the algorithm under conditions of 3 and 10 edge nodes, respectively, at sparsity levels of 10\%, 30\%, 50\%, 70\%, and 90\% for $\mathbf{x}^{(t)}$. It can be observed that when the sparsity is high, the algorithm converges faster with a lower MSE. As sparsity decreases, the convergence speed decreases significantly, and the MSE rises accordingly. This aligns with the characteristics of ADMM, which introduces an L1 regularization term to promote sparsity directly through a soft-thresholding shrinkage operator. When the original data is relatively sparse, the algorithm converges more quickly and approximates the true solution more easily, resulting in smaller errors. The number of edge nodes also impacts the algorithm's performance: increasing the number of edge nodes accelerates the overall computation but slightly degrades performance. Under the same sparsity level, the configuration with 3 edge nodes achieves lower errors. In practical applications, it is necessary to balance the requirements of computational speed and accuracy.
\begin{figure}[htbp]
\centering
\includegraphics[width=3.3in]{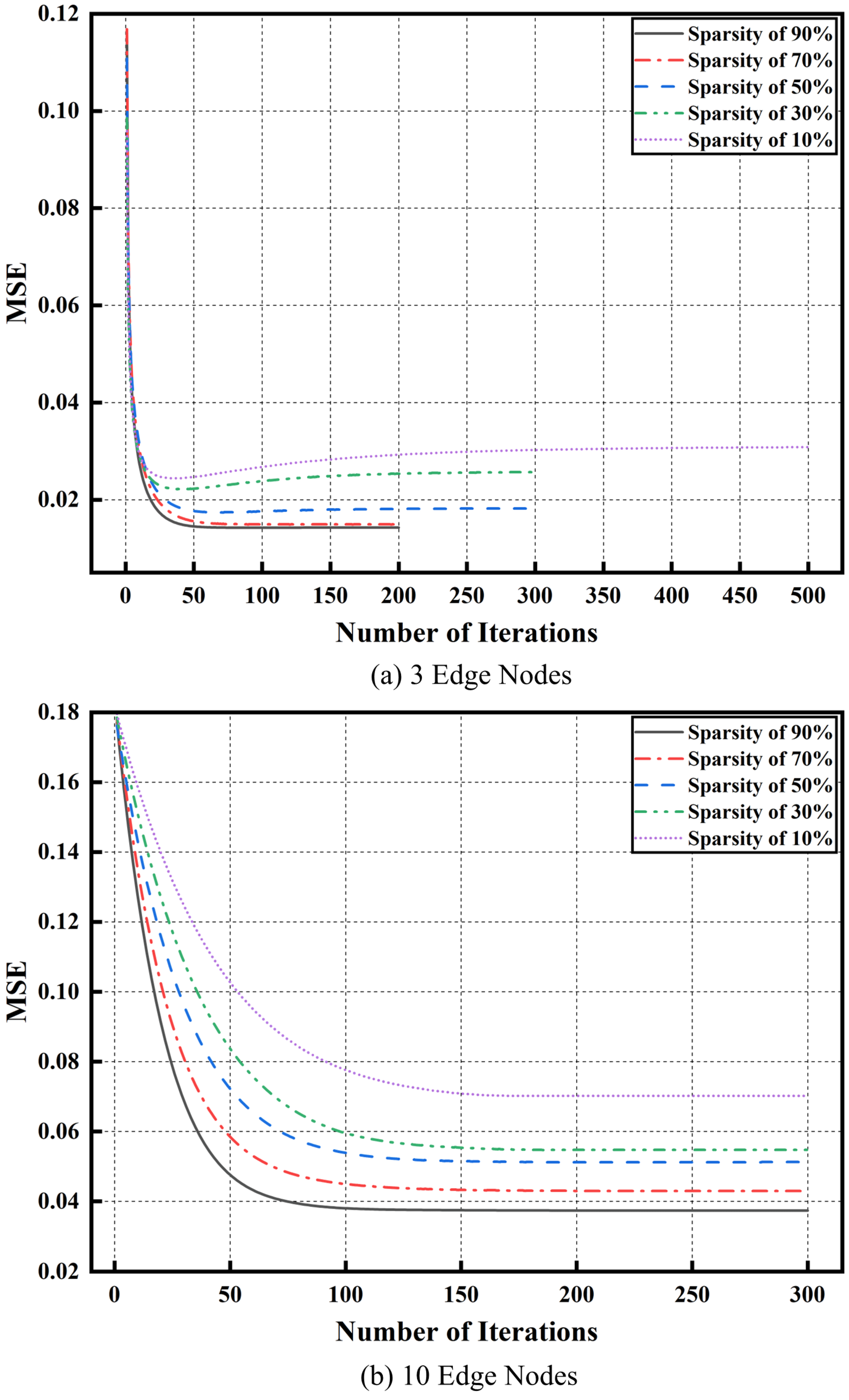}
\captionsetup{format=plain,justification=centering}
\caption{MSE with 3 and 10 edge nodes under varying sparsity levels.}
\label{3-10-nodes-MSE}
\end{figure}

\begin{table*}[htbp]
\centering
\caption{Throughput of CPU and GPU computations on the master node and edge node with different key lengths.}
\label{OPS}
\begin{tabularx}{7in}{|>{\centering\arraybackslash}p{1in}|>{\centering\arraybackslash}X|>{\centering\arraybackslash}X|>{\centering\arraybackslash}X|>{\centering\arraybackslash}X|>{\centering\arraybackslash}X|>{\centering\arraybackslash}X|>{\centering\arraybackslash}X|
                       >{\centering\arraybackslash}X|>{\centering\arraybackslash}X|}
\toprule
 \centering Throughput & \multicolumn{3}{c|}{ModMult(KOPS)} & \multicolumn{3}{c|}{ModExp(OPS)} & \multicolumn{3}{c|} {GPU-accelerated EP computation(OPS)} \\
 \cmidrule(lr){1-1} \cmidrule(lr){2-4} \cmidrule(lr){5-7} \cmidrule(lr){8-10}
 \centering Length of key n (bits) & 1024 & 2048 & 4096 & 1024 & 2048 & 4096 & 1024 & 2048 & 4096 \\
\midrule
Matser Node (CPU) & 207.95 & 42.37 & \textbf{16.97} & 810.2 & 113.38 & \textbf{16.56} & 1352.4 & 211.59 & \textbf{31.29} \\
Master Node(GPU) & 4062.66 & 1544.13 & \textbf{1414.34} & 79352.5 & 31080.16 & \textbf{26594.01} & 137807.85 & 65662.72 & \textbf{50053.47} \\
Edge Node (CPU) & \textbf{41.27} & 11.17 & 3.09 & 53.59 & \textbf{7.26} & 1.01 & 119.69 & \textbf{22.59} & 4.49 \\
Edge Node (GPU) & \textbf{272.39} & 67.23 & 5.05 & 9729.45 & \textbf{2553.25} & 260.98 & 22626.51 & \textbf{7957.05} & 701.75 \\
\bottomrule
\end{tabularx}
\end{table*}

\subsection{Performance Evaluation of Proposed GPU-accelerated 3P-ADMM-PC2}

\subsubsection{Throughput}
The core computation of proposed GPU-accelerated 3P-ADMM-PC2 is proposed GPU-accelerated EP computation in Algorithm \ref{algorithmic2}, which involves modular multiplication (ModMult) and ModExp calculations for large numbers. Operations Per Second (OPS) is used to demonstrate the throughout of main computations on the master node and edge nodes for proposed GPU-accelerated 3P-ADMM-PC2. TABLE \ref{OPS} shows the OPS for the computations with different key lengths, calculated on the CPU and GPU of the master node and edge nodes. Nine thousand unsigned integer plaintexts within the range $[0, n)$ were selected for each trial, repeated ten times, and the averages computed, with the modulus for the modulo operation set to $n^{2}$. It is observed that as the length of the public key $n$ increases, the computational load increases, leading to more frequent communication between GPU's threads and a corresponding decrease in throughput.

\subsubsection{Total Computation Time}
From the beginning of iteration to stable convergence, the process can be divided into two phases: pre-computation and iterative calculation. Each iteration includes local computation and communication. Therefore, the total computation time is given by
\begin{align}
    T_{total} = T_{pre} + (T_{loc} + T_{comm}) \times Iterations,
\end{align}
where $T_{pre}$ denotes the pre-computation time, $T_{loc}$ and $T_{comm}$represents the local computation time and communication times for each iteration round. When $\mathbf{A} \in \mathbb{R}^{3000 \times 27000}$ and 3 edge nodes, Fig. \ref{com_times} displays $T_{pre}$ and $T_{total}$ for different key lengths using the Cen.-ADMM, Dis.-ADMM, the CPU-Dis.-ADMM with CPU based encryption and decryption, and the proposed GPU-accelerated 3P-ADMM-PC2. Both Cen.-ADMM and Dis.-ADMM do not involve encryption and decryption computations, thus the changes of key lengths do not affect the computation time. The Dis.-ADMM has the shortest $T_{pre}$ and $T_{total}$ among all schemes but lacks privacy protection measures during data exchange. For 2048-bit key length, the difference in $T_{total}$ between the CPU-Dis.-ADMM and the Cen.-ADMM is small. The $T_{total}$ for the CPU-Dis.-ADMM is significantly larger than that for the Cen.-ADMM for 4096-bit key length. Thus, the distributed computing scheme using CPU for encryption and decryption becomes impractical. Moreover, the proposed GPU-accelerated 3P-ADMM-PC2 demonstrates better acceleration effects for $T_{total}$ compared to the Cen.-ADMM across different key lengths. Additionally, the proposed GPU-accelerated 3P-ADMM-PC2 provides greater security than the Dis.-ADMM during data exchange.

\subsubsection{Calculation Delay}
When $\mathbf{A} \in \mathbb{R}^{3000 \times 27000}$ and 3 edge nodes, Fig. \ref{com-delay} illustrates the waiting time between nodes during iterative calculations in the proposed GPU-accelerated 3P-ADMM-PC2 with different key lengths. Without loss of generality, we analyze the calculation delay for the 30th and the 80th iterations. As shown in Fig. \ref{com-delay} (a), the master node completes local computations at 9427 seconds and 16721 seconds for the 30th and the 80th iterations, respectively. The edge nodes finish their computations approximately at 9439 seconds and 16733 seconds, respectively. The master node waits approximately 12 seconds for the edge nodes to complete each iteration. Edge nodes mutually wait about 1.5 seconds for each other. The waiting time for the master node is primarily due to the difference in hardware conditions between the master node and edge nodes. The waiting time among edge nodes is caused by uneven plaintext lengths. Especially, as the key length increases, the mutual waiting time between nodes also increases.

\begin{figure*}[htbp]
\centering
\includegraphics[width=7in]{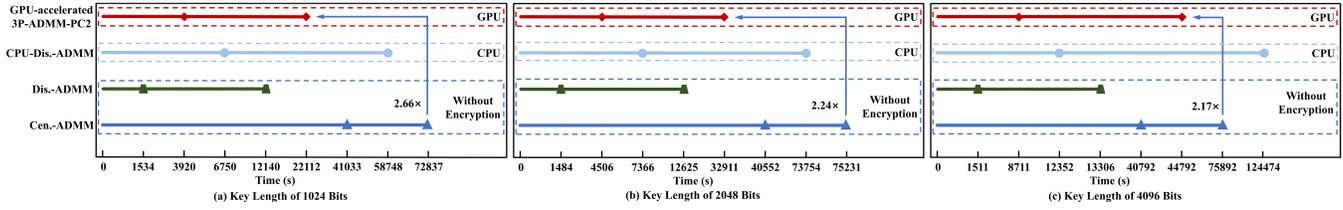}
\captionsetup{format=plain,justification=centering}
\caption{Computational time of various schemes with different keys.}
\label{com_times}
\end{figure*}

\begin{figure*}[htbp]
\centering
\includegraphics[width=7in]{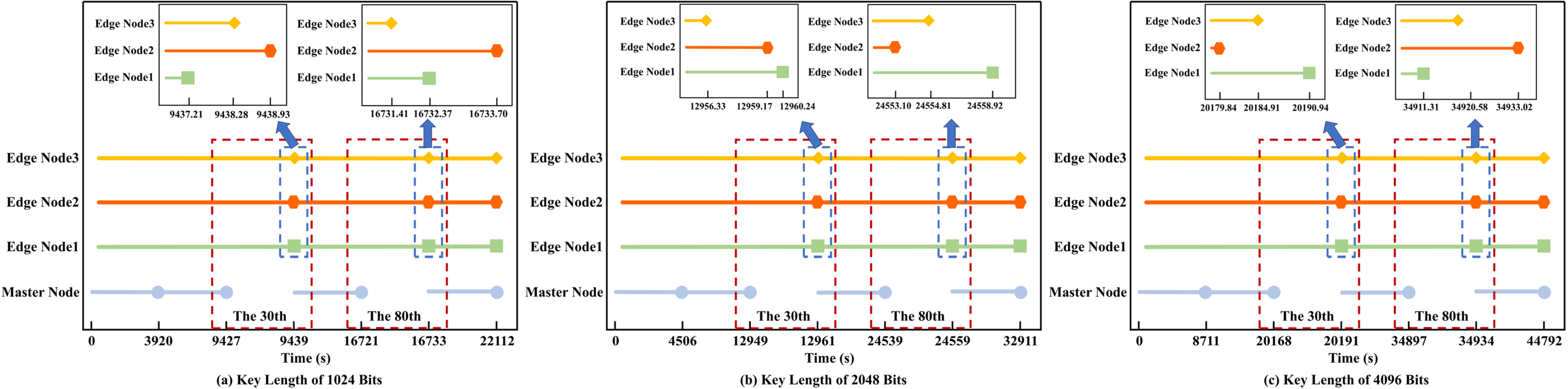}
\captionsetup{format=plain,justification=centering}
\caption{The computational latency among nodes of proposed GPU-Accelerated 3P-ADMM-PC2.}
\label{com-delay}
\end{figure*}

To further analyze the acceleration performance and feasibility of the proposed 3P-ADMM-PC2 framework, we evaluated the latency characteristics of both GPU and CPU implementations under larger-scale sparse problems and a greater number of edge nodes. When the dimension of matrix $\mathbf{A}$ is increased to $\mathbf{A} \in \mathbb{R}^{10000 \times 65536}$ with 10 edge nodes, the latency per iteration for both GPU and CPU computations under different key lengths is shown in Table \ref{10-edge-nodes-CPU-GPU-1024}, \ref{10-edge-nodes-CPU-GPU-2048} and \ref{10-edge-nodes-CPU-GPU-4096}.

\begin{table*}[htbp]
\centering
\caption{Latency comparison of encryption and decryption using GPU vs. CPU with 10 edge nodes and 1024-bit key length.}
\label{10-edge-nodes-CPU-GPU-1024}
\begin{tabular}{|c|c|c|c|c|c|c|c|c|}
\hline
      \textbf{Hardware Environment} & \multicolumn{4}{|c|}{\textbf{GPU}} & \multicolumn{4}{|c|}{\textbf{CPU}} \\
\cmidrule(lr){1-1} \cmidrule(lr){2-5} \cmidrule(lr){6-9}
       \multirow{2}{*}{\textbf{Computation Phase}} & \multirow{2}{*}{\textbf{Initialization}} & \multicolumn{3}{|c|}{\textbf{Iterative Computation}} & \multirow{2}{*}{\textbf{Initialization}} & \multicolumn{3}{|c|}{\textbf{Iterative Computation}} \\
\cmidrule(lr){3-5} \cmidrule(lr){7-9}
      &  & 30th & 80th & 100th &   & 30th & 80th & 100th \\
\hline
      \textbf{Computation Completion Time} (s) & 1735 & 4679 & 9596 & \textbf{11697} & 4112 & 11818 & 24693 & \textbf{29858}\\

      \textbf{Computational Latency} (s) & 1411 & 93 & 98 & 94 & 3771 & 253 & 266 & 260 \\

      \textbf{Communication Latency} (s) & 294 & 8 & 7 & - & 289 & 8 & 8 & -\\

      \textbf{Waiting Latency} (s) & 24 & 6 & 6 & - & 44 & 8 & 9 & -\\
\hline
\end{tabular}
\end{table*}

As shown in Table \ref{10-edge-nodes-CPU-GPU-1024}, when the key length is 1024 bits, the total algorithm runtime using GPU-based encryption and decryption is 11697 seconds, while that using CPU-based encryption and decryption is 29858 seconds. When the number of edge nodes is fixed, the dimensionality and data type of the transmitted parameters remain consistent. Therefore, the communication overhead differs only marginally between GPU and CPU implementations, with variations primarily arising from real-time network fluctuations. Due to the superior computational power of the master node compared to the edge nodes, the master completes its current computational task earlier. This results in a waiting latency of approximately 5 seconds when using GPU encryption and decryption, and about 7 seconds when using CPU encryption and decryption. Additionally, due to uneven plaintext lengths and varying real-time hardware loads, the completion time among edge nodes differs by approximately 1 second with GPU encryption and decryption, and about 2 seconds with CPU encryption and decryption. Overall, the computational overhead per iteration using GPU-based encryption and decryption is significantly lower than that of CPU-based encryption and decryption.

\begin{table*}[htbp]
\centering
\caption{Latency comparison of encryption and decryption using GPU vs. CPU with 10 edge nodes and 2048-bit key length.}
\label{10-edge-nodes-CPU-GPU-2048}
\begin{tabular}{|c|c|c|c|c|c|c|c|c|}
\hline
      \textbf{Hardware Environment} & \multicolumn{4}{|c|}{\textbf{GPU}} & \multicolumn{4}{|c|}{\textbf{CPU}} \\
\cmidrule(lr){1-1} \cmidrule(lr){2-5} \cmidrule(lr){6-9}
       \multirow{2}{*}{\textbf{Computation Phase}} & \multirow{2}{*}{\textbf{Initialization}} & \multicolumn{3}{|c|}{\textbf{Iterative Computation}} & \multirow{2}{*}{\textbf{Initialization}} & \multicolumn{3}{|c|}{\textbf{Iterative Computation}} \\
\cmidrule(lr){3-5} \cmidrule(lr){7-9}
      &  & 30th & 80th & 100th &   & 30th & 80th & 100th \\
\hline
      \textbf{Computation Completion Time} (s) & 4007 & 9107 & 17811 & \textbf{20537} & 6828 & 17141 & 33424 & \textbf{41016} \\

      \textbf{Computational Latency} (s) & 3669 & 159 & 164 & 161 & 6462 & 351 & 364 & 353\\

      \textbf{Communication Latency} (s) & 292 & 11 & 10 & - & 297 & 10 & 12 & - \\

      \textbf{Waiting Latency} (s) & 41 & 3 & 5 & - & 62 & 13 & 16 & - \\
\hline
\end{tabular}
\end{table*}

As shown in Table \ref{10-edge-nodes-CPU-GPU-2048}, when the key length is 2048 bits, the computational overhead per iteration remains lower with GPU-based encryption and decryption than with CPU-based encryption and decryption.    Interestingly, although the master node possesses superior computational power compared to the edge nodes, it faces growing challenges as the key length and task dimensionality increase. In each iteration, the edge nodes now complete their tasks earlier than the master node. When using GPU encryption and decryption, the edge nodes waits approximately 4 seconds for the master node, while the edge nodes wait about 2 seconds for each other. In contrast, with CPU encryption and decryption, the edge nodes experiences a waiting latency of around 9 seconds, and the edge nodes wait approximately 4 seconds for each other.

\begin{table*}[htbp]
\centering
\caption{Latency comparison of encryption and decryption using GPU vs. CPU with 10 edge nodes and 4096-bit key length.}
\label{10-edge-nodes-CPU-GPU-4096}
\begin{tabular}{|c|c|c|c|c|c|c|c|c|}
\hline
      \textbf{Hardware Environment} & \multicolumn{4}{|c|}{\textbf{GPU}} & \multicolumn{4}{|c|}{\textbf{CPU}} \\
\cmidrule(lr){1-1} \cmidrule(lr){2-5} \cmidrule(lr){6-9}
       \multirow{2}{*}{\textbf{Computation Phase}} & \multirow{2}{*}{\textbf{Initialization}} & \multicolumn{3}{|c|}{\textbf{Iterative Computation}} & \multirow{2}{*}{\textbf{Initialization}} & \multicolumn{3}{|c|}{\textbf{Iterative Computation}} \\
\cmidrule(lr){3-5} \cmidrule(lr){7-9}
      &  & 30th & 80th & 100th &   & 30th & 80th & 100th \\
\hline
      \textbf{Computation Completion Time} (s) & 6476 & 15021 & 29203 & \textbf{34967} & 13719 & 32904 & 66595 & \textbf{79338}\\

      \textbf{Computational Latency} (s) & 6114 & 278 & 288 & 281 & 13314 & 644 & 681 & 659 \\

      \textbf{Communication Latency} (s) & 295 & 13 & 11 & - & 297 & 12 & 11 & -\\

      \textbf{Waiting Latency} (s) & 56 & 12 & 10 & - & 95 & 20 & 17 & -\\
\hline
\end{tabular}
\end{table*}

As shown in Table \ref{10-edge-nodes-CPU-GPU-4096}, when the key length is 4096 bits, the computational overhead for encryption and decryption increases significantly. However, the computational cost per iteration remains lower with GPU-based encryption and decryption than with CPU-based encryption and decryption. It is noteworthy that the waiting latency among nodes increases with the key length. This is because the fluctuation in computational effort resulting from the non-uniform distribution of plaintext becomes more pronounced as the key length grows.

When the number of edge nodes is 3 and the dimension of $\mathbf{A}$ is $\mathbf{A} \in \mathbb{R}^{10000 \times 65536}$, no scenario occurs where edge nodes wait for the master node, regardless of whether the key length is 1024, 2048, or 4096 bits. Therefore, in practical computational tasks, the balance among task scale, number of edge nodes, and disparities in hardware computational power directly affects both computational and waiting latency in each specific operation. When the number of edge nodes is large, the parallel computation across multiple edge nodes can even surpass the master node in speed. In future work, we will continue to investigate the constraints among task scale, number of edge nodes, and hardware capability gaps, and further optimize task allocation strategies to reduce overhead.
\vspace{-10pt}

\subsection{Application Example: Power Network Reconstruction}
In the environment depicted in Fig. \ref{hardware-environment}, consider a distributed edge privacy-computing network comprising one master node and three edge nodes. The goal is to recover the topology of a power network with $N$ buses from $M$ sets of observed data. For the adjacency matrix $\mathbf{d}_i$ of bus $i$, have the following optimization problem:
\begin{equation}\label{power_network_rec}
\arg\min_{\mathbf{d}_i\in\mathbb{R}^N}\frac{1}{2}\left\| \mathbf{S}_i-\mathbf{\Phi}_i\mathbf{d}_i  \right\|_{2}^{2}+\lambda {{\left\| \mathbf{d}_i  \right\|}_{1}},
\end{equation}
where $\mathbf{S}_i \in \mathbb{R}^M$ represents the combination of current values at different time points for bus $i$; $\mathbf{\Phi}_i \in \mathbb{R}^{M \times N} (M \ll N)$ denotes the observation matrix of voltage differences and resistance values at different time points for bus $i$; and $\mathbf{d}_i \in \mathbb{R}^N$ is the adjacency vector for bus $i$.
In a LAN, using the MATPOWER toolbox \footnote{https://matpower.org/}, a large scale real power network with 13569 buses is utilized, employing the local reconstruction method mentioned in \cite{liu2023distributed}. For the high dimensionality of network, the centralized methods for network reconstruction is no longer feasible. The accuracy of the reconstruction is evaluated using the binary classification metrics area under the receiver operating characteristic (AUROC) and area under the precision-recall curve (AUPRC). In binary classification metrics, False Positive Rate and Recall are used as the x-axis, while True Positive Rate and Precision are used as the y-axis. The area under the curve represents the values of AUROC and AUPRC, respectively.

Fig. \ref{AUROC_AUPRC} displays the accuracy of network reconstruction for the Dis.-ADMM scheme and the proposed GPU-accelerated 3P-ADMM-PC2 scheme under different data ratios ($R_D$). The proposed GPU-accelerated 3P-ADMM-PC2 scheme demonstrates high reconstruction accuracy. The AUROC and AUPRC of the Dis.-ADMM scheme coincide with those of the proposed GPU-accelerated 3P-ADMM-PC2 scheme, indicating once again that the accuracy loss of the proposed quantization method is negligible.
\vspace{-10pt}
\begin{figure}[hbtp]
\centering
\includegraphics[width=3.5in]{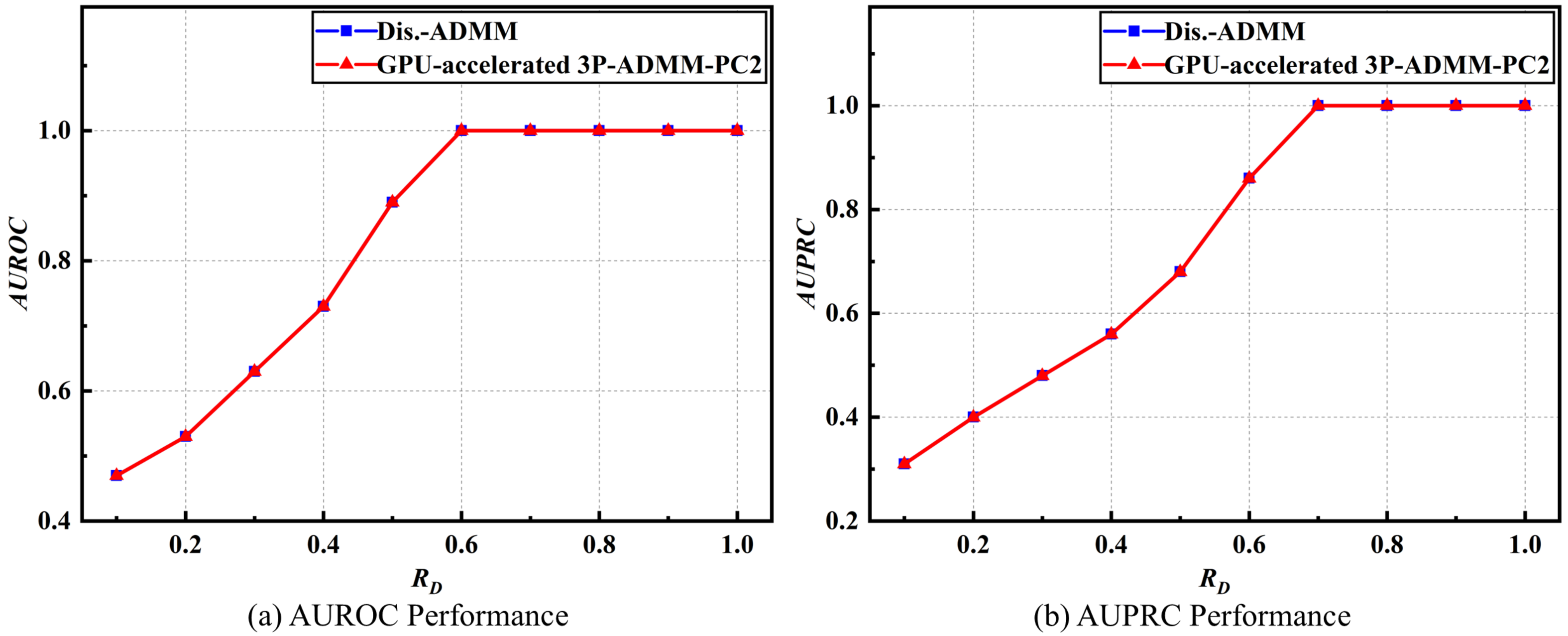}
\captionsetup{format=plain,justification=centering}
\caption{The power grid reconstruction results of 13659 Buses, where $\Delta={10}^{15}$ and the key length is 2048 bits of GPU-accelerated 3P-ADMM-PC2.}
\label{AUROC_AUPRC}
\end{figure}
\section{Conclusion}
This paper first proposed a parallel collaborative ADMM privacy computing algorithm, named 3P-ADMM-PC2, for distributed edge networks, where the edge nodes and master node complete the ADMM task calculation in three phases
(3P), obtaining parallel collaborative privacy computing (PC2). Specially, the quantization method of proposed 3P-ADMM-PC2 addresses the limitations on plaintext for homomorphic encryption without affecting its homomorphic operations. Additionally, the ModExp of large integers has been successfully deployed on GPU across different platforms, enabling parallel encryption, decryption, and homomorphic computations, and facilitating multi-platform collaborative computing. To reduce the computational burden on the master node, a collaborative encryption and decryption algorithm between the master node and edge nodes, named GPU-accelerated 3P-ADMM-PC2, is finally proposed.

\bibliographystyle{IEEEtran}
\bibliography{trans}
\vspace{0.9em}

\begin{IEEEbiography}[{\includegraphics[width=1in,height=1.25in,clip,keepaspectratio]{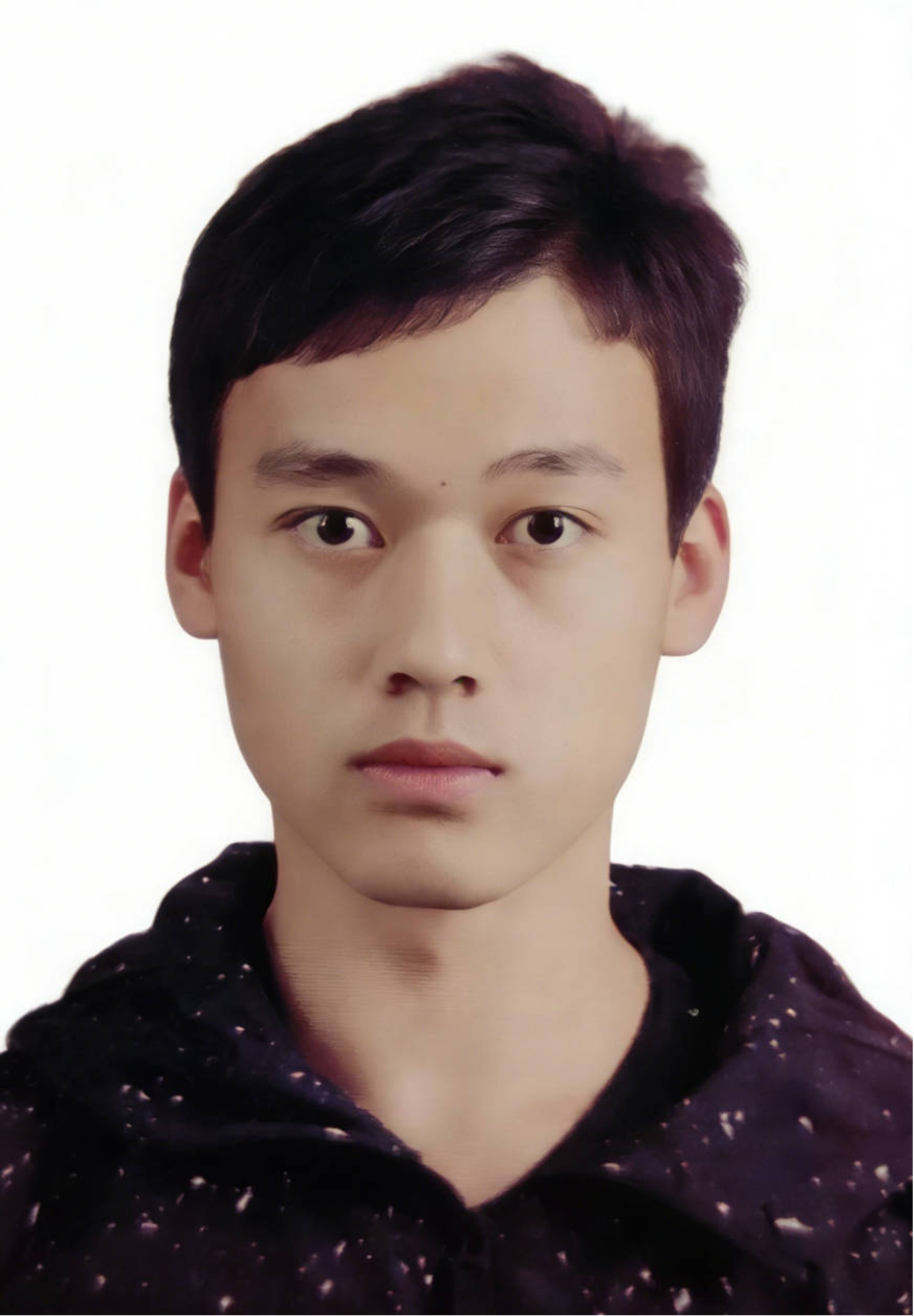}}]
{Mengchun Xia} is currently pursuing a master's degree at Xizang University in Lhasa, China. His current research interests include privacy computing and distributed computing.
\end{IEEEbiography}

\begin{IEEEbiography}[{\includegraphics[width=1in,height=1.25in,clip,keepaspectratio]{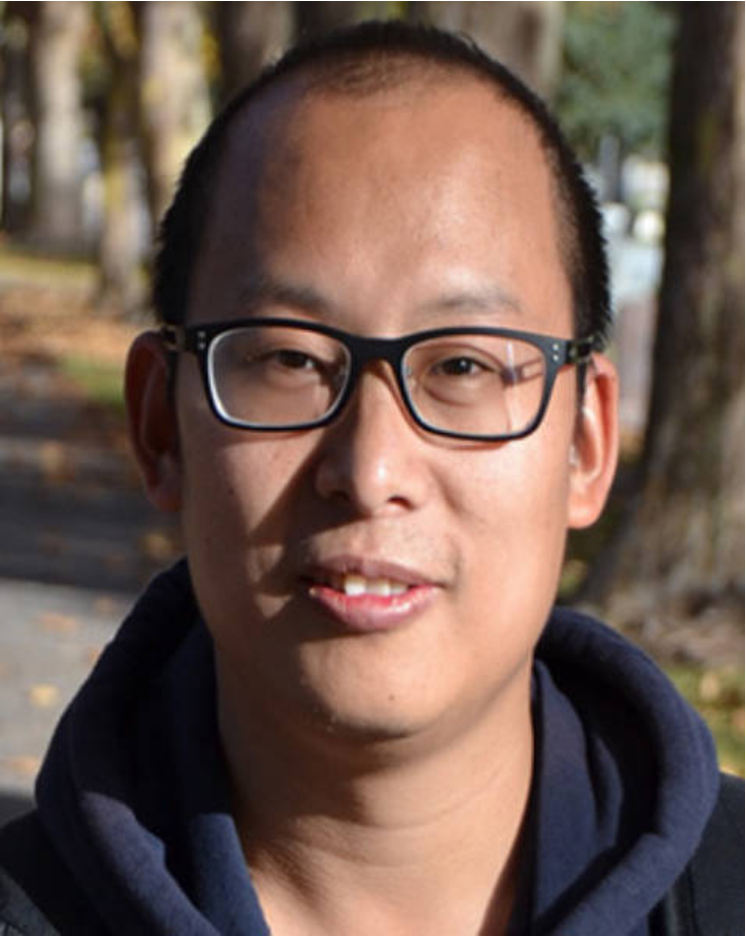}}]
{Zhicheng Dong} (M'14) received the B.E., M.S. and Ph.D. degrees from the School of information Science and Technology from Southwest Jiaotong University, Chengdu, China, in 2004, 2008 and 2016, respectively. From 2013 to 2014, he was a
Visiting Scholar with the Department of Electrical and Computer Engineering at Utah State University, Logan, UT, USA. From 2018 to 2019, he was a Research Fellow with the Department of Electrical and Computer Engineering, Columbia University, New York, NY, USA. Since 2018, he has been a Professor with the School of Information Science and Technology, Xizang University, Lhasa, China. He was the recipient of the EAI GameNets2022 Best Paper Award. He was a Track Chair and Technical Sessions Chair of 2022 IEEE ICICN. He was a TPC members of the major international conferences, such as IEEE ICC, IEEE GLOBECOM, IEEE WCNC, IEEE VTC, IEEE PIMRC. He was a Reviewer for many well known journals, such as IEEE Wireless Communications Magazine, IEEE JOURNAL ON SELECTED AREAS IN COMMUNICATIONS, IEEE TRANSACTIONS ON COMMUNICATIONS, IEEE TRANSACTIONS ON VEHICULAR TECHNOLOGY. His research interests include artificial intelligence, computer vision, adaptation technology, performance analysis, and signal processing.
\end{IEEEbiography}

\begin{IEEEbiography}[{\includegraphics[width=1in,height=1.25in,clip,keepaspectratio]{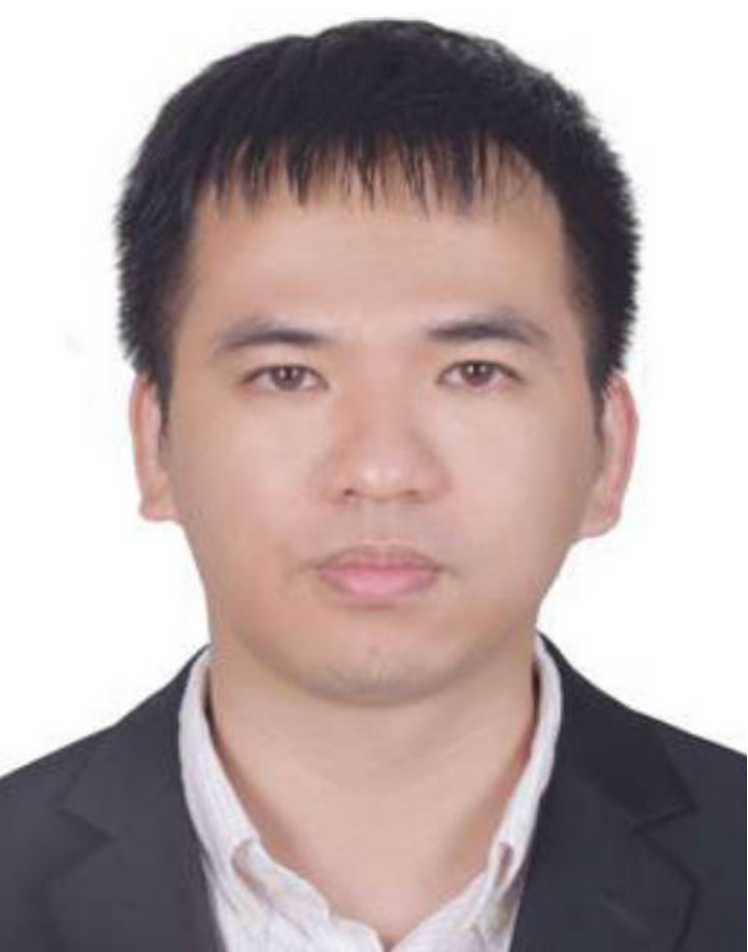}}]
{Donghong Cai} (SM'25, M'20) received the B.S. degree from the School of Mathematics and Information Sciences, Shaoguan University, Shaoguan, China, in 2012, and the M.S. and Ph.D. degrees from Southwest Jiaotong University, Chengdu, China, in 2015 and 2020, respectively. From October 2017 to October 2018, he was a visiting Ph.D. student with Lancaster University, Lancaster, U.K., and the University of Manchester, Manchester, U.K. He served as a Guest Editor for Physical Communication and Journal on Wireless Communications and Networking. He currently serves as associate editor for IEEE Open Journal of Signal Processing and International Journal of Communication Systems. He is currently an associate professor with the college of cyber security, Jinan University, Guangzhou, China. His current research interests include signal detection, security coding, distributed Internet of Things, privacy preservation, machine learning, and nonorthogonal multiple access.
\end{IEEEbiography}

\begin{IEEEbiography}[{\includegraphics[width=1in,height=1.25in,clip,keepaspectratio]{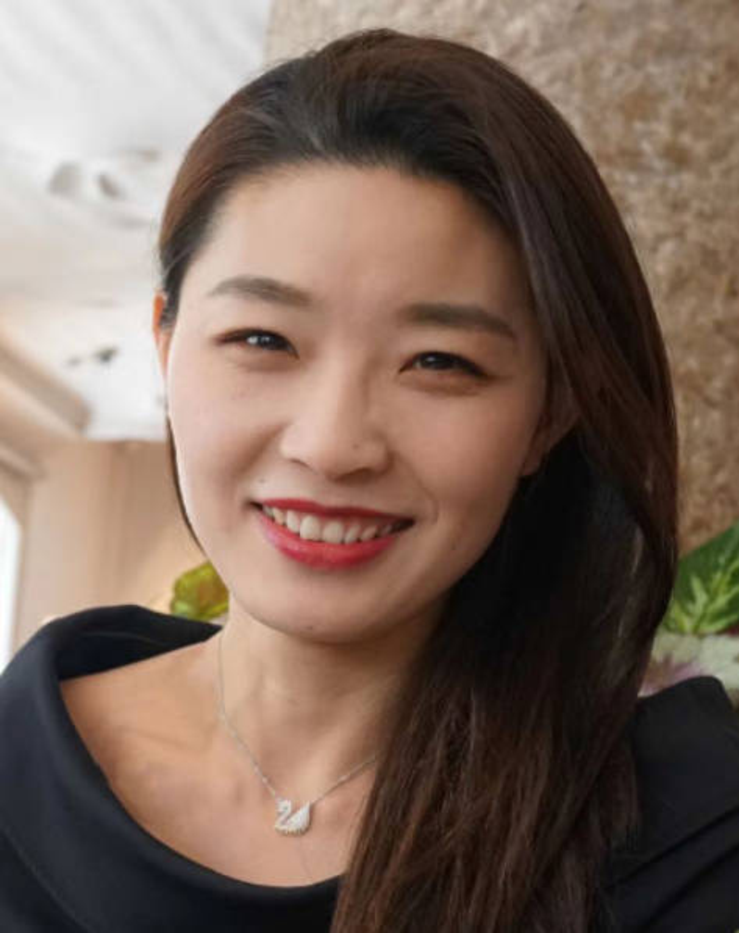}}]
{Fang Fang} (M'18-SM'23) is currently an Assistant Professor in the Department of Electrical and Computer Engineering and the Department of Computer Science, Western University, Canada. She received the Ph.D. degree in electrical engineering from the University of British Columbia (UBC), Canada. Her current research interests include machine learning for intelligent wireless communications, non-orthogonal multiple access (NOMA), reconfigurable intelligent surface (RIS), multi-access edge computing (MEC), Semantic Communications and Edge AI, etc. Dr. Fang has been serving as a general chair for EAI GameNets 2022, a publications chair for IEEE VTC 2023 Fall, and a symposium chair for IEEE Globecom 2023, IEEE VTC 2024 Spring and IEEE ICC 2025. Dr. Fang currently serves as an Editor for IEEE Communication Letters and IEEE Open Journal of the Communications Society (OJ-COM). She received the Exemplary Reviewer Certificates from the IEEE Transactions on Communications in 2017 and 2021, as well as the Exemplary Editor Certificates from IEEE OJ-COM in 2021 and 2023. Dr. Fang won the IEEE SPCC Early Achievement Award in 2023.
\end{IEEEbiography}

\begin{IEEEbiography}[{\includegraphics[width=1in,height=1.25in,clip,keepaspectratio]{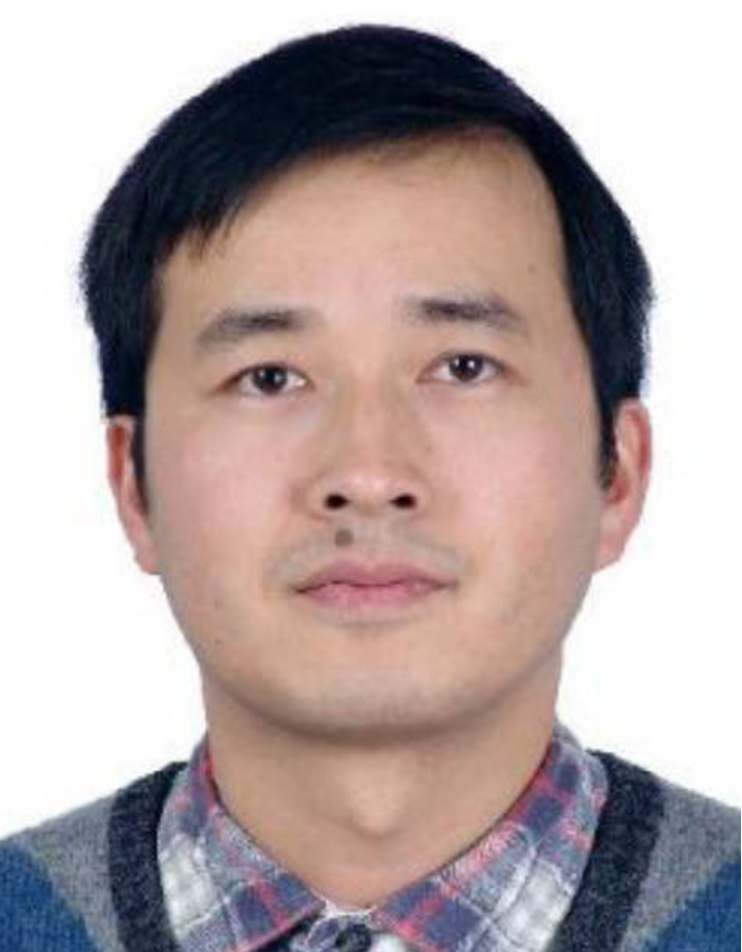}}]
{Lisheng Fan} received the bachelor's degree from the Department of Electronic Engineering, Fudan University, in 2002, the master's degree from the Department of Electronic Engineering, Tsinghua University, China, in 2005, and the Ph.D. degree from the Department of Communications and Integrated Systems, Tokyo Institute of Technology, Japan, in 2008. He is currently a Professor with Guangzhou University. He has published many articles in international journals, such as IEEE TRANSACTIONS ON WIRELESS COMMUNICATIONS, IEEE TRANSACTIONS ON COMMUNICATIONS, and IEEE TRANSACTIONS ON INFORMATION THEORY, and papers in conferences, such as IEEE ICC, IEEE Globecom, and IEEE WCNC. His research interests include wireless cooperative communications, physical-layer secure communications, interference modeling, and system performance evaluation. He is a Guest Editor of EURASIP Journal on Wireless Communications and Networking. He served as the Chair for Wireless Communications and Networking Symposium for Chinacom 2014. He has also served as a member of Technical Program Committees for IEEE conferences, such as Globecom, ICC, WCNC, and VTC.
\end{IEEEbiography}

\begin{IEEEbiography}[{\includegraphics[width=1in,height=1.25in,clip,keepaspectratio]{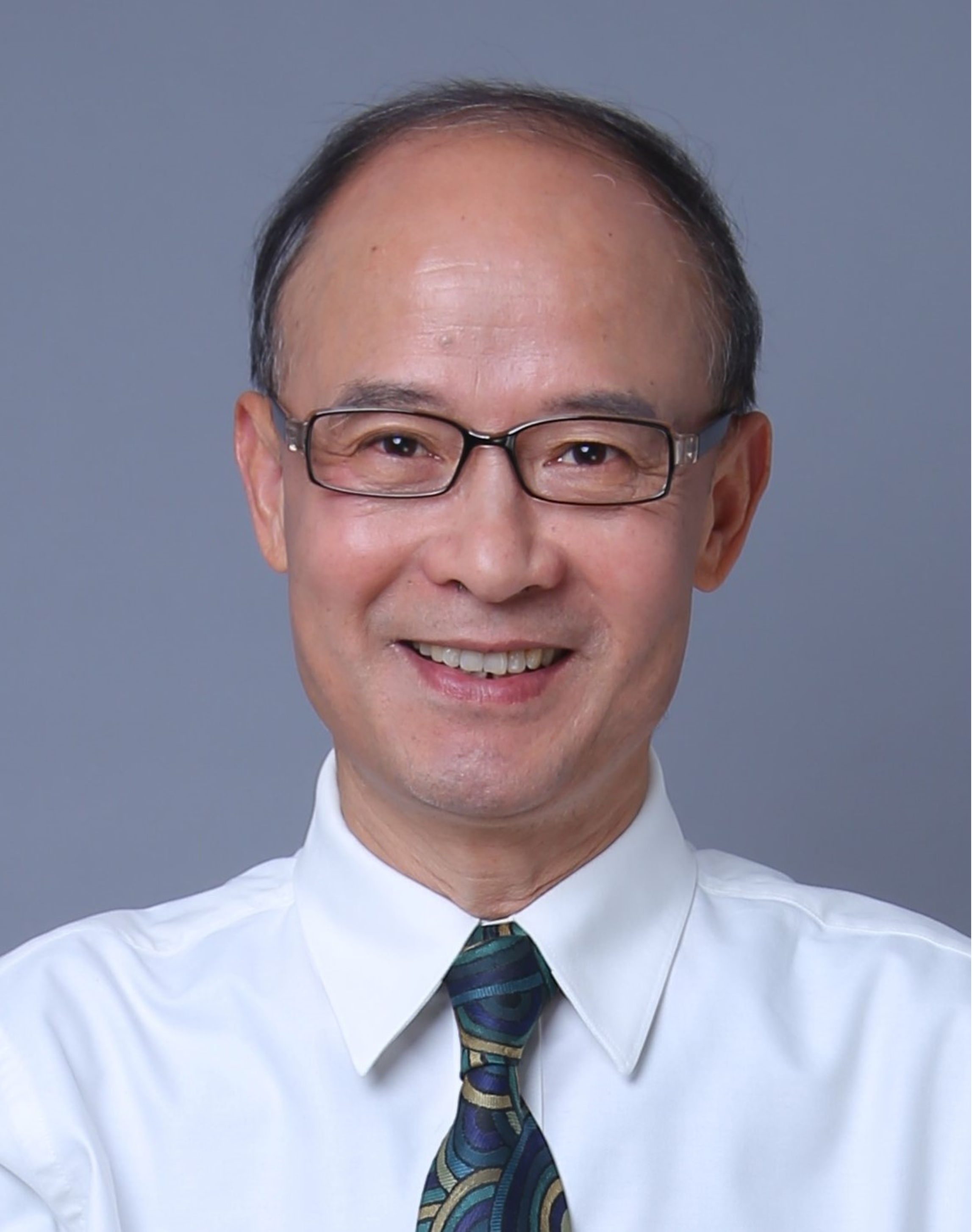}}]
{Pingzhi Fan} (Fellow, IEEE) received the M.Sc. degree in computer science from Southwest Jiaotong University, China, in 1987, and the Ph.D. degree in electronic engineering from Hull University, U.K., in 1994. He is currently a distinguished professor of Southwest Jiaotong University (SWJTU), honorary dean of the SWJTU-Leeds Joint School (2015-), honorary professor of the University of Nottingham (Ningbo, 2025), and a visiting professor of Leeds University, UK (1997-). He is a recipient of the UK ORS Award (1992), the National Science Fund for Distinguished Young Scholars (1998, NSFC), IEEE VT Society Jack Neubauer Memorial Award (2018), IEEE SP Society SPL Best Paper Award (2018), IEEE VT Society Best Magazine Paper Award (2023), and several IEEE conference best paper awards. He served as chief scientist of a National 973 Plan Project (MoST, 2012.1-2016.12). He also served as general chair or TPC chair of a number of IEEE conferences, including VTC2016Spring, ITW2018, IWSDA2022, PIMRC2023, VTC2025Fall, as well as the coming ISIT2026, ICC2028. His research interests include high mobility wireless communications, multiple access techniques, ISAC, signal design \& coding, etc. He is an IEEE VTS Distinguished Speaker (2022-2028), a fellow of IEEE, IET, CIE and CIC.
\end{IEEEbiography}

\end{document}